 \newtheorem{Proposition}{Proposition}[section]
\newtheorem{fact}{Fact}[section]
\newcommand{\be}{\begin{equation}}
\newcommand{\ee}{\end{equation}}
\newcommand{\Id}{\textrm{\rm Id}}
\newcommand{\ddd}{\mathrm{d}}
\newcommand{\trace}{\operatorname{tr}}
\newcommand{\const}{\mathrm{const}}
\newcommand{\weg}[1]{}
\begin{document}
\title*{Thoughts about potentials with finite-band spectrum and finite-dimensional reductions of integrable  systems }
\author{ 
Andrey Yu. Konyaev and 
Vladimir S. Matveev} 
\institute{ 
\and Andrey Yu. Konyaev \at  Faculty of Mechanics and Mathematics, Moscow State University, 119992, Moscow, Russia
 \email maodzund@yandex.ru
  \and Vladimir S. Matveev \at 
Institut f\"ur Mathematik, Friedrich Schiller Universit\"at Jena,
07737 Jena, Germany \email  vadimir.matveev@uni-jena.de}  
\titlerunning{Potentials with finite-band spectrum and  BKM systems }
\authorrunning{  A. Konyaev, V. Matveev}
\maketitle

\abstract{
 We repeat, using methods developed for BKM systems, the famous results of S. Novikov \cite{novikov74}, J. Moser \cite{moser80,moser81}, and A. Veselov \cite{veselov80} that relate Schr\"odinger-Hill operators with finite-band spectra, solutions of the Neumann system, and certain solutions of the KdV equations. Our general motivation is to determine whether it is possible to apply inverse scattering methods to BKM systems, and in the conclusion, we indicate initial observations in this direction.}

 \section{Introduction}

J\"urgen Moser, in his very  influential and widely  cited works \cite{M83,moser80,moser81} attracted attention to relation of the following three a priory independent objects of interest  of mathematical physics, namely the    KdV equation, which is an $\infty$-dimensional integrable system,  finite-band spectrum of  
Schr\"odinger-Hill operators, and classical 
finitely-dimensional integrable systems corresponding to the geodesics of ellipsoid and the Neumann system. We give necessary definitions later, in  \S \ref{sec:neumann}.  The works \cite{M83, moser80,moser81}, though containing new interesting results, are of rather survey nature. Certain connections discussed by Moser were published  by other authors. In particular, the relation between the orbits  of the Neumann system and the finite-gap solutions of the KdV equation is discussed in \cite{alber79, veselov80}. The  relation 
between geodesics on ellipsoid and solutions of the  Neumann system is understood at least  in \cite{knorrer82}. In \cite{novikov74}, it was shown that stationary solutions of KdV  give  potentials such that the corresponding  Schr\"odinger-Hill operators  have  finitely many connected components  of spectrum. In   \cite{novikov74} it was shown that any quasi-periodic potential with finitely many connected components  of spectrum of the corresponding  Schr\"odinger-Hill operator   is generated by this way.   
See also \cite{DMB76, DKN2001}.

J. Moser, and also many other mathematics of that time,  were  very excited about the relations and possibly viewed them as a wondeful coincidence, a kind of  mathematical magic. The goal of the paper is to give a short proof of the relations. We will use new understanding coming from the recently found connection  between finitely-dimensional integrable systems and geodesically equivalent metrics, see e.g. \cite{MT97,BKMseparation} and  also the very recent results of \cite{BKMreduction} on finite-dimensional reductions of  BKM-systems\footnote{BKM systems  is a family of multicomponent integrable PDE-systems    introduced in \cite{nijapp4}. In the present note we mostly concentrate on KdV equation, which is a BKM system. In the Conclusion and Outlook section \ref{sec:con}, we comment on possible generalizations for other BKM systems.}

Our note is organised as follows. First we introduce and discuss Benenti systems. All systems which we consider: the Neumann system, geodesics on the ellipsoid and the finite-dimensional reductions of the KdV equation can be described by certain Benenti systems, which  we explicitly write. Next, we    
prove a relatively simple  Lemma \ref{lem:2} which explains  when two Benenti system share the same solutions. The relation of the KdV system to the Neumann system and of the geodesic flow of the metric geodesically equivalent to the metric of ellipsoid  follows  directly from this Lemma.     
Lemma \ref{lem:3}  specifies the conditions that the parameters of the solutions of the KdV equation corresponding to the Neumann system must satisfy.

 Next, we consider the Sturm-Liouville-Schr\"odinger equation whose potential comes from a solution of the finite-dimensional reduction of the KdV equation. We show  in Lemma \ref{lem:p2} that one can explicitly write the solutions of this equation using solutions  of the  Benenti systems used  in  the finite-dimensional reduction of the KdV equation. Next,  solutions of finite-dimensional reduction of the KdV equation that correspond to solutions of the Neumann systems can be studied using method of separation of variables.  Employing this method, we  show that the spectrum contains finitely many bands. 
\section{Preliminaries}

\subsection{Benenti systems}

By 
 {\it Benenti systems}\footnote{The terminology was suggested in \cite{BM03,IMM}}
 we understand   a class of integrable finite-dimensional Hamiltonian systems on the cotangent bundle $T^*M$.  The Hamiltonian is the sum of the kinetic energy $\tfrac{1}{2}g^{ij}p_ip_j$, where $g$ is a metric of any signature, and a potential energy $V:M\to \mathbb{R}$. The metric $g$ and the potential energy $U$ satisfy certain differential-geometric and nondegeneracy conditions which we introduce and discuss now. We will describe these systems, and give a local classification of them in this section. We will see that, under nondegeneracy assumptions which are fulfilled in  the   cases  we   consisder in the present paper, Benenti systems are given by two functions, $f$ and $U$, of one variable.

We start with a pair $(\textrm{metric $g$}, \ \textrm{$g$-seladjoint (1,1)-tensor $L$})$ satisfying the following condition: 
\begin{equation} \label{eq:M0}
    L_{ij, k} = \lambda_{j} g_{ij}+ \lambda_{i} g_{jk}.
\end{equation}
In the equation above we use $g$ for index  manipulation  and  denote by comma the covariant derivative with respect to the Levi-Civita connection of  $g$. The 1-form $\lambda_i$ staying on the right hand side is necessarily $\lambda_i= \tfrac{1}{2} d\trace_g(L)$.

The equations \eqref{eq:M0} appeared many times independently in different branches of mathematics. In particular,  in the context of integrable systems,      solutions $L$  of this equation, for a given $g$,  are  called  in \cite{ Crampin03,CST} {\it  special conformal Killing tensors}. See also  \cite{Bbook, MB02}.

In differential geometry, \eqref{eq:M0}  appeared   at least in \cite{Sinjukov79},  in the context of geodesically equivalent metrics, see also \cite{BM03,Matveev07}.    The equation \eqref{eq:M0} is  now  called {\it geodesic or projective compatibility} equation, see e.g. \cite{BMR21, BM11}. Recall that two metrics $g$ and $\bar g$ on the same manifold are {\it geodesically (or, which is the same, projectively) equivalent}, if every $g$-geodesic, after an appropriate reparameterisation, is a $\bar g$-geodesic. The relation of \eqref{eq:M0} to geodesic equivalence is as follows: if 
$L$ is nondegenerate and satisfies \eqref{eq:M0}, then the metric  $\bar g_{ij}$  whose matrix is given by 
\begin{equation} \label{eq:barg}
    \bar g= \tfrac{1}{\textrm{det}(L)} g  L^{-1} 
\end{equation}
is geodesically equivalent to $g$. Moreover, if $\bar g$ is geodesically equivalent to $g$, then $L$ reconstructed by \eqref{eq:barg} satisfies \eqref{eq:M0}.

Given $L^i_j$ satisfying \eqref{eq:M0}, one can construct a family of commutative integrals for the geodesic flow of $g$. Namely, consider the following family of functions on the tangent bundle $T^*M$ of the manifold $M$  of dimension $N$,   which is a polynomial in $\lambda $ of degree $N-1$:{\small
\begin{equation} \label{eq:MI}
   p\in T^*M \ \mapsto \  I_\lambda(p)=\tfrac{1}{2} \det(\lambda \Id-L) g^*((L^*-\lambda\Id )^{-1} p, p)= I_{0}(p)\lambda^{N-1} + I_{1}(p)\lambda^{N-2}+ \cdots + I_{N-1}(p) .
\end{equation}}
In the above formula,
$g^*$ is the induced bilinear form on the cotangent bundle, and $L^*$ is the induced operator on the cotangent bundle. In the matrix notation, the matrix of $g^*$ is the inverse of the matrix $g$ and the matrix of $L^*$ is the transposed of the matrix $L$. In the index notation, the middle part  of \eqref{eq:MI} reads 
$$
 \tfrac{1}{2}\det(\lambda \Id- L)  g^{ij} \left((L-\lambda \Id)^{-1}\right)^{s}_ip_s p_j.
$$

\begin{theorem}[\cite{BM03,CST, Crampin03, MT97, Topalov,TM2003}] \label{thm:MT}
The functions $I_i$ commute with respect to  the  standard  Poisson structure on  $T^*M$. 
\end{theorem}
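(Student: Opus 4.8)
The plan is to reduce everything to a single polynomial identity in the spectral parameters. Since $I_\lambda$ is a polynomial in $\lambda$ with coefficients $I_0,\dots,I_{N-1}$ and the Poisson bracket is bilinear, it suffices to establish $\{I_\lambda, I_\mu\}\equiv 0$ for all $\lambda,\mu$; extracting the coefficient of each monomial $\lambda^a\mu^b$ then gives $\{I_i,I_j\}=0$. In particular, since the leading coefficient $I_0$ is a constant multiple of the kinetic energy $\tfrac12 g^{ij}p_ip_j$, this also yields commutativity with the geodesic Hamiltonian, so that the $I_i$ are genuine integrals of the geodesic flow of $g$.

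To prove $\{I_\lambda,I_\mu\}\equiv 0$ I would first note that $I_\lambda$ is \emph{quadratic} in the momenta: writing $R_\lambda:=(L-\lambda\Id)^{-1}$ and $\chi_\lambda:=\det(\lambda\Id-L)$, the middle expression in \eqref{eq:MI} is $I_\lambda=\tfrac12 A_\lambda^{ij}p_ip_j$, where $A_\lambda$ is an explicit algebraic function of $g$ and $L$ (essentially $\chi_\lambda$ times $g^*R_\lambda^*$). For two quadratic-in-momenta functions $F=\tfrac12 A^{ij}p_ip_j$, $G=\tfrac12 B^{ij}p_ip_j$ one has the standard (Schouten-bracket) identity
\begin{equation*}
 \{F,G\} = \big(A^{m(i}\nabla_m B^{jk)} - B^{m(i}\nabla_m A^{jk)}\big)p_ip_jp_k ,
\end{equation*}
with $\nabla$ the Levi-Civita connection of $g$ and round brackets denoting symmetrization over $i,j,k$ (the Christoffel contributions cancel precisely because $A$ and $B$ are symmetric). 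Hence the statement reduces to the purely tensorial identity that the symmetrization of $A_\lambda^{m(i}\nabla_m A_\mu^{jk)}-A_\mu^{m(i}\nabla_m A_\lambda^{jk)}$ over $i,j,k$ vanishes.

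The next step is to feed in the geometry. From \eqref{eq:M0}, raising one index, $\nabla_k L^i_j=\lambda_j\delta^i_k+\lambda^i g_{jk}$ with $\lambda_i=\tfrac12 d\trace_g L$, and therefore $\nabla_k R_\lambda=-R_\lambda(\nabla_k L)R_\lambda$ and $\nabla_k\chi_\lambda=\chi_\lambda\,\trace(R_\lambda\nabla_k L)$. Substituting these into the bracket turns $\{I_\lambda,I_\mu\}$ into a cubic polynomial in $p$ whose coefficients are explicit contractions of $g$, $\lambda_i$, $R_\lambda$, $R_\mu$, $\chi_\lambda$, $\chi_\mu$, and the goal is to see that each such coefficient vanishes after symmetrization. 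I expect this final cancellation to be the only real obstacle. Two facts should drive it: (i) the resolvent identity $R_\lambda-R_\mu=(\mu-\lambda)R_\lambda R_\mu$ together with $R_\lambda R_\mu=R_\mu R_\lambda$ (both are rational functions of the single operator $L$), which is exactly what lets a ``$\lambda$-term'' be matched against a ``$\mu$-term''; and (ii) the very rigid form of $\nabla L$ forced by \eqref{eq:M0} — a sum of the two pieces $\lambda_j\delta^i_k$ and $\lambda^i g_{jk}$ — without which a generic $g$-selfadjoint $L$ would not give an involutive family. Concretely, I would group the terms coming from $\nabla R$ and from $\nabla\chi$ separately and collapse the $\lambda$–$\mu$ difference via the resolvent identity.

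If one prefers a less computational route, there are two alternatives, both resting on the same mechanism. First: where $L$ has real, pairwise distinct eigenvalues, \eqref{eq:M0} forces the Nijenhuis torsion of $L$ to vanish, so its eigenvalues are local coordinates (the ``St\"ackel'' or ``Liouville'' coordinates), in which $g$ takes St\"ackel form and the $I_i$ become classical St\"ackel integrals, manifestly in involution; one then propagates the identity $\{I_i,I_j\}=0$ to all of $T^*M$ and to the degenerate and complex cases, since it is a polynomial (hence closed) condition and the generic locus is dense. Second: for each $\lambda$ the function $I_\lambda$ is a conformal multiple of the kinetic energy of $g_\lambda:=\tfrac{1}{\det(L-\lambda\Id)}\,g\,(L-\lambda\Id)^{-1}$, which by \eqref{eq:barg} is geodesically equivalent to $g$, and the assertion becomes the involutivity of the integrals attached to a pencil of projectively equivalent metrics. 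I would nonetheless present the direct computation as the primary argument, because it needs no genericity hypotheses and treats arbitrary signature of $g$ uniformly.
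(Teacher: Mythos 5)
You should first note that the paper does not actually prove Theorem \ref{thm:MT}: it is quoted from \cite{MT97,BM03,CST,Crampin03,Topalov,TM2003}, and the closest the paper comes to a mechanism is the St\"ackel-matrix discussion around \eqref{eq:st1}--\eqref{eq:st2}, i.e.\ essentially your first ``alternative'' route. Your primary plan --- reduce to $\{I_\lambda,I_\mu\}=0$, write $I_\lambda=\tfrac12 A_\lambda^{ij}p_ip_j$ with $A_\lambda$ built from $\chi_\lambda$ and the resolvent, use the Schouten-bracket formula for quadratic-in-momenta functions, and feed in $\nabla_k L^i_j=\lambda_j\delta^i_k+\lambda^i g_{jk}$ from \eqref{eq:M0} together with the resolvent identity --- is the standard direct argument of the cited references, and your setup (polynomiality in $\lambda,\mu$, extraction of coefficients, cancellation of Christoffel terms) is correct.

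The genuine gap is that the entire content of the theorem is exactly the cancellation you defer: ``I expect this final cancellation to be the only real obstacle'' is not a proof of it. Listing the ingredients (commutativity of $R_\lambda$ and $R_\mu$, the resolvent identity, the two-term structure of $\nabla L$) does not verify that the symmetrized cubic-in-momenta coefficients vanish; this is precisely the step at which a generic $g$-selfadjoint $L$ fails, so it must be carried out, not asserted. Your fallback route has its own hole: the claim that the locus where $L$ has real pairwise distinct eigenvalues ``is dense'' is false in general. The theorem is stated for metrics of arbitrary signature, and for such $g$ the operator $L$ may have complex eigenvalues or nontrivial Jordan blocks on open sets (the paper itself emphasizes this, citing \cite{nij1,BM15}), and even in the Riemannian case eigenvalues may have constant multiplicity on open sets (e.g.\ $L=\const\,\Id$), so there may be no points where the pair takes the form (\ref{eq:LC}, \ref{eq:Ldiag}) with real distinct eigenvalues and nothing to propagate from. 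Closing the argument therefore requires either completing the bracket computation you postponed, or replacing the density claim by a genuine extension argument (complexification/analyticity, or a pointwise algebraic-identity argument), which is what the cited literature supplies; as written, neither branch of your proposal is closed.
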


Note that the function  $I_0$ is  the Hamiltonian $\tfrac{1}{2}g^*(p, p)$ of the geodesic flow.

The number of  functionally independent integrals $I_i$ is the degree of the minimal polynomial of $L$ at least at one point, see \cite[Lemma 5.6 and Corollary 5.7]{Matveev07} or \cite[Theorem 2 and Proposition 3]{Topalov}.  
Later we assume that $L$ is differentially nondegenerate\footnote{see \cite[Definition 2.10]{nij1}} almost everywhere. Then,  by \cite[\S 4.2]{nij1}, at almost every point,  the degree of the minimal polynomial of $L$  equals the dimension of the manifold\footnote{(1,1)-tensors such that the degree of the minimal polynomial of $L$  equals the dimension of the manifold are called gl-regular, see \cite{nij3}} which implies that    the geodesic flow of $g$ is  Liouville integrable. As it will be clear later, near almost every point, the integrable system can be effectively  analyzed  by the  method of separation of variables.   

Let us give two examples of geodesically compatible pairs $(g,L)$ such that $L$ is differentially-nondegenerate.
The first is classical and is essentially due to  T. Levi-Civita \cite{Levi-Civita}. 
The  metric  $g_{ij}$ and the tensor  $L_j^i$ are given by  the formulas  
\begin{eqnarray} \label{eq:LC}
   g^f_{LC}& = & \sum_{i=1}^N \left(\tfrac{dq_i^2}{f(q_i)}\prod_{s=1, s\ne i}^{N} (q_i-q_s) \right), \\ 
  L_{\textrm{diag}}   &= & \textrm{diag}(q_1,q_2,\dots,q_N) . \label{eq:Ldiag}
\end{eqnarray}
Above $f$ is an arbitrary function of one variable such that it is 
never zero. 
Note that in a more standard way to write the formula \eqref{eq:LC}, the functions $f(q_i)$ in the denominators  may be different functions, so the $i$th term in the sum \eqref{eq:LC} reads  $$\left(\tfrac{dq_i^2}{f_i(q_i)}\prod_{s=1, s\ne i}^{N} (q_i-q_s)\right).$$ This way is clearly equivalent to the one in \eqref{eq:LC}, at least in a sufficiently small neighborhood, 
since the formula gives a non-degenerate metric  only if $q_i\ne q_j$ for $i\ne j$, and under this assumption we can define 
the functions $f(q_i)$  as $f_i(q_i)$ restricted to the range of the coordinate $q_i$.

\begin{remark}
   In the examples interesting for the present paper, the function  $f$ is a real-analytic function.  This is not a special property of our examples, but a rather general phenomenon. Indeed, the metric in the examples is real analytic.
By
\cite{KM16},   $L$ satisfying \eqref{eq:M0} is real analytic as well. If $L$ is differentially nondegenerate, which is the case for the Benenti systems  coming from finite-dimensional reductions of BKM system,
by \cite[Theorem 6.2]{nij1} $f$ must be  a globally defined  real-analytic function. Moreover, if 
 $g$ is Riemannian and $L$ is gl-regular at least in one point, which is the case for the Benenti systems coming from the Neumann systems and from the geodesic flow of the ellipsoid, $f$ again must be a globally defined real-analytic function by \cite{M06}. 
\end{remark}

For the further use, note that the metric $\bar g^f_{LC}$ geodesically equivalent to $g^f_{LC}$ constructed by \eqref{eq:barg} reads 
\begin{eqnarray} \label{eq:bargLC}
   \bar g^f_{LC}& = & \sum_{i=1}^N \left(\tfrac{(q^i)^{N-3}\ddd q_i^2}{f(q_i)}\prod_{s=1, s\ne i}^{N} \frac{(q_i-q_s)}{q_iq_s} \right).  
\end{eqnarray}

Let us also note that, if we call $g_{LC}^1$ the metric \eqref{eq:LC} with $f(t)\equiv 1$, then one can obtain 
the metric \eqref{eq:LC}, i.e., the metric $g^f_{LC}$,  by the formula 
\begin{equation} \label{eq:gf0} 
g^f_{LC}=  g^1_{LC}f(L)^{-1}, 
\end{equation}
see e.g. \cite[\S 1.3]{BM11} for the definition and discussion  of analytic functions of $(1,1)$-tensors.  Observe also that the metric 
$g_1$  can be invariantly  characterised within $g$ satisfying  \eqref{eq:M0} with respect to a fixed differentially nondegenerate $L$. Indeed,  the coordinates such that  $(g, L)$ are given by
(\ref{eq:LC}, \ref{eq:Ldiag}) are geometrically distinguished as  they are eigenvalues  of $L$. 

The second example   was studied  in particular in \cite[Proposition 6.1]{nij1}, see also \cite{appnij2}.  In the context of integrable systems, it can be found e.g.  in \cite[\S V]{BM06}. 
We first start with the  contravariant metric $g$ and the (1,1)-tensor $L^i_j$  given by  
\begin{equation} \label{eq:comp}
g_1^{ij}= 
\begin{pmatrix}    
0 & \cdots & 0 & 0 & \!\!\!\! 1\\ 
0 &  \cdots & 0  &\! 1 & w_1\\ 
\vdots & \iddots &\iddots & \iddots  &w_{{2}} \\ 
0&\!\!1&w_{{1}}& \iddots &   \vdots    \\
1&w_{{1}}&w_{{2}}  &\cdots & \!\!\!  w_{{N-1}}
  \end{pmatrix} \, \  \  L_{\textrm{comp}} = \begin{pmatrix}
-w_1 & 1 & 0 & \cdots  & 0 \\
-w_2 & 0 & 1 & \ddots &   \vdots\\
\vdots & \vdots & \ddots & \ddots & 0 \\
-w_{N-1} &  0 & \dots & 0 & 1\\
-w_N & 0  & \dots & 0 & 0 
\end{pmatrix}.
\end{equation}

The metric $g_1$ in this example is flat and is of  splitted  signature. 
By \cite[Theorem 6.2.]{nij1}, in the real-analytic category, any (contravariant)  metric $g$ which is  geodesically compatible with $L_{\textrm{comp}}$  from \eqref{eq:comp}   is given by 
\begin{equation} \label{eq:gf}
   g_f =  f(L_{\textrm{comp}})g_1.   
\end{equation}

Note that the visual difference between the formulas \eqref{eq:gf0}   and  \eqref{eq:gf}  is artificial since   \eqref{eq:gf}  is for a  contravariant metric and \eqref{eq:gf0} is for a covariant metric. In fact, the formulas  \eqref{eq:gf}  and \eqref{eq:gf0}  are equivalent. 
Moreover, the  pair $(g_f, L_{\textrm{comp}})$ given by \eqref{eq:comp}, in a neighborhood of any point such that  $L$ has $N$ real different eigenvalues,  
it isomorphic by a coordinate change  to the pair   $(g^f_{LC}, L_{\textrm{diag}})$.  The coordinate change $w(q)$
 transforming $(g_f, L_{\textrm{comp}})$ to  $(g^f_{LC}, L_{\textrm{diag}})$ is given by the following relation: 
\begin{equation} \label{eq:trans}
    \det(\lambda \Id - L_{\textrm{diag}(q)}) = \det(\lambda \Id - L_{\textrm{comp}(w)}) 
\end{equation}
Note that as $L_{\textrm{comp}}$ is in the companion form, \begin{equation} \label{eq:uq}\det(\lambda \Id - L_{\textrm{comp}})= \lambda^N+ w_1 \lambda^{N-1}+\cdots+w_N, \end{equation}
so \eqref{eq:trans} gives  an explicit formula $w(q)$ in terms of symmetric polynomials of $q$-variables: 
$$
 w_1= -q_1-q_2-\cdots-q_N\ , \ \  w_2= q_1q_2 + q_1q_3+ \cdots+ q_{N-1}q_N \ , \ \dots \ , \ w_n=(-1)^N \det(L_{\textrm{diag}}).
$$
This observation holds, with necessary natural amendments, in the case when some  different eigenvalues of $L$ are complex-valued; of course in this case the coordinates in which $(g, L)$ has the form (\ref{eq:LC}, \ref{eq:Ldiag}) may also be complex-valued. See \cite{nij1, BM15} for a discussion of complex eigenvalues of $L$ satisfying \eqref{eq:M0} and of Nijenhuis\footnote{(1,1)-tensor is called {\it Nijenhuis operator}, if its Nijenhuis torsion vanished. BKM systems were constructed in \cite{nijapp4} within the Nijenhuis geometry project initiated in \cite{BMMT, nij1}} operators with complex eigenvalues.

Let us now discuss the integrals. It appears that the metric $g^f_{LC}$, in coordinates $q_i$, can be obtained by the so-called St\"ackel construction, see e.g. \cite{BKMreduction}. 
Indeed, take the St\"ackel matrix 
\begin{equation}\label{eq:st1}
S_{ij}= \begin{pmatrix} (q_1)^{N-1} & (q_1)^{N-2}& \cdots & 1 \\ 
                               (q_2)^{N-1} & (q_2)^{N-2}& \cdots & 1 \\
                                   \vdots & & & \vdots \\ 
                                   (q_N)^{N-1} & (q_N)^{N-2}& \cdots & 1\end{pmatrix}
\end{equation}
and construct the functions $I_0,\dots, I_{N-1}$  on $\mathbb{R}^{2N}(q,p)$, quadratic in $p$-variables,  by the   formula 
\begin{equation}\label{eq:st2}
\begin{pmatrix} (q_1)^{N-1} & (q_1)^{N-2}& \cdots & 1 \\ 
                               (q_2)^{N-1} & (q_2)^{N-2}& \cdots & 1 \\
                                   \vdots & & & \vdots \\ 
                                   (q_N)^{N-1} & (q_1)^{N-2}& \cdots & 1\end{pmatrix} \begin{pmatrix} I_0 \\ 
                               I_1 \\
                                   \vdots   \\ 
                                   I_N\end{pmatrix} = 
\begin{pmatrix} -\tfrac{1}{2}f(q_1)(p_1)^2 \\ 
                            -\tfrac{1}{2} f(q_2)   (p_2)^2 \\
                                   \vdots   \\ 
                                  -\tfrac{1}{2} f(q_N) (p_N)^2\end{pmatrix}                                  . 
\end{equation}
It is known, see e.g. \cite[Fact 1.2]{BKMreduction}, that the  functions Poisson commute with respect to the standard Poisson structure. Moreover, the functions $I_0,\dots, I_{N-1}$ staying in \eqref{eq:st2} coincide with those  obtained via \eqref{eq:MI}. In particular, 
the function 
$I_0$ is the Hamiltonian of the geodesic flow of the metric $g^f_{LC} $ given by \eqref{eq:LC}.

It is also known how to ``introduce'' the potential energy in the formula \eqref{eq:st2}. Locally, in coordinates $q_1,\dots, q_N$ used in \eqref{eq:st2}, the 
freedom is the choice of $N$ functions of one variable. In our context,   these functions  
are essentially the same  function, so we proceed with  the  construction under  this assumption\footnote{Actually, if the $(g, L)$ are given by \eqref{eq:comp}, in the analytic category and near the point 
$(w_1=0,\dots,w_N=0)$,  these    $N$ functions  can be glued in one function; this again follows from \cite[Theorem
6.2]{nij1}}.
In order to do it, we slightly modify \eqref{eq:st2} to obtain 
\begin{equation}\label{eq:st3}
\begin{pmatrix} (q_1)^{N-1} & (q_1)^{N-2}& \cdots & 1 \\ 
                               (q_2)^{N-1} & (q_2)^{N-2}& \cdots & 1 \\
                                   \vdots & & & \vdots \\ 
                                   (q_N)^{N-1} & (q_N)^{N-2}& \cdots & 1\end{pmatrix} \begin{pmatrix} \tilde I_0 \\ 
                               \tilde I_1 \\
                                   \vdots   \\ 
                                  \tilde  I_{N-1}\end{pmatrix} = 
\begin{pmatrix}-\tfrac{1}{2}f(q_1) (p_1)^2 + U(q_1)\\ 
                     -\tfrac{1}{2} f(q_2)         (p_2)^2 + U(q_2)\\
                                   \vdots   \\ 
                           -\tfrac{1}{2} f(q_N)       (p_N)^2+ U(q_N)\end{pmatrix}                            
.\end{equation}
The functions $\tilde I_i$ given by \eqref{eq:st3} Poisson commute. Each of them is the sum of the   quadratic in momenta part  which is $I_i$ given by \eqref{eq:st2}, and  a function which depends on the position $(q_1,\dots, q_N)$ only. In particular, the function $\tilde I_0$ is 
 the sum of the kinetic energy 
coming from the metric $g^f_{LC}$ and a potential energy.

We will also need the formula for obtained system in the companion coordinates, in which $g$ and $L$ are given by  \eqref{eq:comp}.  
In order to do it, for a real analytic function $U$ of one variable  and for a pair $(g, L)$ satisfying \eqref{eq:M0} and such that $L$ is gl-regular,  consider 
the functions 
$U_1,U_2,\dots, U_N$ defined by the following relation: 
\begin{equation} \label{eq:st4}
U(L)=U_0L^{N-1}+U_1L^{N-1}+\dots+U_{N-1}\Id. 
\end{equation}
The left  hand side is an analytic function of $L$, it is well-defined (1,1)-tensor provided $U$ is defined at the eigenvalues of $L$, 
see e.g. the discussion in \cite{nij1,BM11}. The right hand side is a polynomial of order $N-1$ in $L$.  Since $L$ is gl-regular,   \eqref{eq:st4} viewed as a system of linear equations on functions $U_0,...,U_{N-1}$  determines the functions $U_0,\dots,U_{N-1}$ on the manifold. 

\begin{lemma}  \label{lem:potential} Suppose $(g, L)$ are geodesically compatible and $L$ is gl-regular. 

Then, for any real-analytic function $U$, 
the functions $\tilde I_i:= I_i+ U_i$, where $I_i$ is as in \eqref{eq:MI} and $U_i $ are   from \eqref{eq:st4}, pairwise commute. 
\end{lemma}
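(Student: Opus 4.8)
The plan is to establish the claim by reducing it to the already-quoted Stäckel statement. Near a generic point, pass to the eigenvalue coordinates $q_1,\dots,q_N$ of $L$ in which $(g,L)$ takes the Levi-Civita form (\ref{eq:LC}, \ref{eq:Ldiag}); this is legitimate since $L$ is gl-regular (hence has $N$ distinct eigenvalues at a generic point), and the quantities involved are real-analytic, so an identity proven on an open dense set holds everywhere. In these coordinates the quadratic integrals $I_i$ of (\ref{eq:MI}) coincide with the functions $I_0,\dots,I_{N-1}$ produced by the Stäckel system (\ref{eq:st2}), as recalled in the text after (\ref{eq:st2}). Hence it suffices to show that in these same coordinates the functions $U_i$ defined by (\ref{eq:st4}) are exactly the position-dependent functions added on the right-hand side of (\ref{eq:st3}), i.e. the solution of the linear system $S_{ij}U_j = U(q_i)$ with $S$ the Stäckel matrix (\ref{eq:st1}); the commutativity of the $\tilde I_i = I_i + U_i$ then follows from the quoted fact that the modified Stäckel functions in (\ref{eq:st3}) Poisson commute (\cite[Fact 1.2]{BKMreduction} or the general Stäckel theory).

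First I would spell out what (\ref{eq:st4}) gives in the diagonal frame. Since $L = L_{\mathrm{diag}} = \mathrm{diag}(q_1,\dots,q_N)$, the tensor $U(L)$ is simply $\mathrm{diag}(U(q_1),\dots,U(q_N))$, and a polynomial $U_0 L^{N-1} + U_1 L^{N-2} + \dots + U_{N-1}\Id$ evaluated on this diagonal matrix has $i$-th diagonal entry $\sum_{k=0}^{N-1} U_k (q_i)^{N-1-k}$. Equating the two expressions entry by entry yields precisely $\sum_k S_{ik} U_k = U(q_i)$ with $S$ as in (\ref{eq:st1}); since the $q_i$ are distinct, $S$ is an invertible Vandermonde matrix, so this is the same linear system that defines the potential terms in (\ref{eq:st3}). (I would also note the harmless typo in (\ref{eq:st4}), where the second term should read $U_1 L^{N-2}$.) This identifies $U_i$ with the Stäckel-construction potentials, and combined with the identification of the $I_i$ already in the text, we get $\tilde I_i = I_i + U_i$ coinciding with the $\tilde I_i$ of (\ref{eq:st3}), which commute.

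The one point needing care — and the step I expect to be the main obstacle — is the passage between coordinate-free and coordinate descriptions, namely checking that Poisson commutativity, an invariant statement on $T^*M$, can be verified in the eigenvalue chart and then propagated. Two sub-issues arise: (i) the eigenvalue coordinates are only defined on the open dense set where $L$ has simple spectrum, so one argues that $\{\tilde I_i, \tilde I_j\}$, being real-analytic functions on $T^*M$ (real-analyticity of $g$, $L$ by \cite{KM16}, and of $U(L)$ and the $U_i$ since $U$ is real-analytic), vanish on a dense open set and hence identically; (ii) when $L$ has complex eigenvalues one uses the companion-coordinate form (\ref{eq:comp}) and the complexified version of the change of variables (\ref{eq:trans}), as indicated in the text, or simply invokes real-analytic continuation in the parameters. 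A clean alternative that sidesteps charts altogether is to verify directly that $\{I_i, U_j\} + \{U_i, I_j\} = 0$ using the geodesic-compatibility equation (\ref{eq:M0}) for $L$ together with the fact that $U(L)$ is built algebraically from $L$; but since the Stäckel route is shorter and all the needed facts are already cited, I would present the chart-based argument and only remark that the analytic-continuation step handles the non-generic and complex cases.
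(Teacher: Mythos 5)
Your proposal is correct and follows essentially the same route as the paper: the paper also proves the lemma by noting that \eqref{eq:st4} is coordinate-invariant and, in the diagonal coordinates (\ref{eq:LC}, \ref{eq:Ldiag}), reduces exactly to the Stäckel system \eqref{eq:st3}, whose functions commute by the cited fact; your Vandermonde identification of the $U_i$ and the analytic-continuation remark simply spell out steps the paper leaves implicit.
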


Lemma \ref{lem:potential} was obtained within and is a natural component  of the  theory  of  conservation laws and symmetries of  Nijenhuis operators  developed in \cite{nij4,appnij5, nij3}, where its more general versions  are proved.  The version staying above is easy to prove directly: the formula \eqref{eq:st4} is invariant with respect to coordinates changes, so one can prove it in any coordinate system. In the ``diagonal'' coordinate system, in which   $g$ and $L$ have the form
(\ref{eq:LC},\ref{eq:Ldiag}), the formula \eqref{eq:st4} is equivalent to \eqref{eq:st3}.

We see that a Benenti system  with potential is given by the following data: we can choose  $L$  and  functions $f, U$ of one variables. Our nondegeneracy condition on the pair is that  $L$ is differentially nondegenerate almost everywhere.  This  condition  implies that the pair $(g, L)$ is  given by \eqref{eq:comp} almost everywhere, and by (\ref{eq:LC}, \ref{eq:Ldiag}) almost everywhere  provided we allow  some of the coordinates to be complex-valued.

\begin{remark} \label{rem:1}
If $L=L_{old}$ is a solution of \eqref{eq:M0}, then for any constants $\textrm{const}_1$ and $\textrm{const}_2$ the (1,1)-tensor $L_{new}= \textrm{const}_1 L_{old} + \textrm{const}_2 \, \Id $ is also a solution.  The change from  $L_{old}$ to $L_{new} $ corresponds to the following change of $f$ and $U$: 
\begin{equation}
\label{eq:fnew} f_{old}(t)= (\textrm{const}_1)^{-1-N}f_{new}(\textrm{const}_1\,  t  + \textrm{const}_2) \ , \ \ U_{old}= U_{new}(\textrm{const}_1 \, t  + \textrm{const}_2).  \end{equation}
This change    does not affect    the vector space generated by the integrals $\tilde I_0,\dots, \tilde I_{N-1}$. \end{remark}

\subsection{Neumann system,  geodesic flow  on ellipsoid and stationary solutions of the KdV equation     as  Benenti systems} \label{sec:neumann}
\subsubsection{Neumann system, geodesic flow of the metric of the ellipsoid, and the geodesic flow of the metric geodesically equivalent to ellipsoid as Benenti systems}
{\it Neumann system} is the Hamiltonian system  describing the movement of a particle on  the sphere in a quadratic potential. That is, we consider  the standard sphere $S^N\subset 
\mathbb{R}^{N + 1}(X_1,\dots, X_{N+1})$ 
with the standard metric which we denote $g_s$, and the function $V:S^N\to \mathbb{R}  $
which is the restriction of the function $\sum_{i=1}^{N+1}(X_i)^2a_i$ to the sphere. We assume that all $a_i$ are different, without loss of generality  
$a_1 < a_2 < \dots < a_{N + 1}$. Note that  adding  a constant to the potential energy does not affect the equations of motion and adds the constant to all $a_i$, so we can assume without loss of generality that $0<a_1$.
The Hamiltonian of the Neumann system is the sum of the kinetic energy coming from $g_s$ and the potential energy  $V$.

We will also consider the   ellipsoid  in  $ \mathbb{R}^{N + 1}$     defined by the equation 
\begin{equation}\label{eq:ellipsoid}
\sum_{i=1}^{N+1} \frac{X_i^2}{a_i}=1.     
\end{equation} 
We assume  that all $a_i$  are positive and different\footnote{The assumption that all $a_i$ are different is indeed important for us; the assumption that they are positive can be omitted and  instead of ellipsoid we can consider a noncompact quadric and  make sense of the case when one $a_i=0$}  and order  them by $0<a_1<\cdots <a_{N+1}$.

   The restriction of the standard metric to the ellipsoid will be denoted by $g_e$.   The corresponding geodesic flow is the Hamiltonian system whose Hamiltonian is the kinetic energy corresponding to $g_e$.

\begin{fact}  \label{fact:1} The metric $g_e$ of the ellipsoid and the standard  metric  $g_s$ of the  sphere admits  $L$ satisfying \eqref{eq:comp} 
    which is differentially nondegenerate at almost every point. 
\end{fact}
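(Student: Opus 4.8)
The plan is to exhibit the tensor $L$ explicitly in each of the two cases and to reduce both required properties --- geodesic compatibility \eqref{eq:M0} and differential nondegeneracy at almost every point --- to Levi-Civita's example (\ref{eq:LC},\ref{eq:Ldiag}). Recall from the discussion around \eqref{eq:comp}--\eqref{eq:trans} that once $L$ solves \eqref{eq:M0} and is, in addition, gl-regular with functionally independent eigenvalues at a point, it is differentially nondegenerate near that point and $(g,L)$ acquires there the companion form \eqref{eq:comp}. Hence it suffices to produce, for $g=g_e$ and for $g=g_s$, a solution $L$ of \eqref{eq:M0} whose eigenvalues are $N$ pairwise distinct functions on an open dense subset.

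For the ellipsoid, write $A=\mathrm{diag}(a_1,\dots,a_{N+1})$, so \eqref{eq:ellipsoid} reads $\langle A^{-1}X,X\rangle=1$, and use the classical Jacobi elliptic (confocal) coordinates $q_1,\dots,q_N$ on the ellipsoid: for a generic point $X$ of the ellipsoid these are the $N$ values $s\neq 0$ for which $X$ lies on the confocal quadric $\sum_i X_i^2/(a_i-s)=1$; they interlace the $a_i$ and are pairwise distinct on an open dense set. It is classical that in these coordinates $g_e$ is of the Levi-Civita form \eqref{eq:LC} for an explicit function $f$ of one variable: one writes $X_i^2=c_i\prod_k(a_i-q_k)$ with constants $c_i$, differentiates, squares, and uses a partial-fraction identity to see that the mixed terms $dq_k\,dq_l$ ($k\neq l$) cancel, leaving \eqref{eq:LC}. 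By the first example above, $L_{\mathrm{diag}}=\mathrm{diag}(q_1,\dots,q_N)$ then solves \eqref{eq:M0}, and, having the pairwise distinct coordinate functions $q_i$ as eigenvalues, it is the standard model of a differentially nondegenerate (in particular gl-regular) operator, cf. \cite{nij1}, wherever the $q_i$ are distinct.

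For the sphere one argues identically, with the Jacobi coordinates replaced by the sphero-conical coordinates $q_1,\dots,q_N$ attached to the same $a_1<\dots<a_{N+1}$: for generic $X\in S^N$ these are the $N$ roots $\mu$ of $\sum_i X_i^2/(a_i-\mu)=0$, they interlace the $a_i$ and are pairwise distinct on an open dense set, and in them $g_s$ is again of the form \eqref{eq:LC}, now with $f$ proportional to $\prod_i(t-a_i)$. Alternatively, and with no change of coordinates, one may take for $L$ the $g_s$-selfadjoint operator $L_X\colon T_XS^N\to T_XS^N$ on $T_XS^N=X^\perp$ obtained by $g_s$-orthogonal projection of $A$, i.e. $L_Xv=Av-\langle Av,X\rangle X$. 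A short computation identifies its eigenvalues with the roots $\mu$ above --- if $L_Xv=\mu v$ with $0\neq v\perp X$ and $\mu\notin\{a_1,\dots,a_{N+1}\}$, then $v$ is a multiple of $(A-\mu\,\Id)^{-1}X$ and the requirement $v\perp X$ reads $\sum_i X_i^2/(a_i-\mu)=0$, giving $N$ generically simple roots --- while $\trace_{g_s}L_X=\trace A-\sum_i a_iX_i^2$ differs from the Neumann potential by a constant; a direct verification then gives \eqref{eq:M0} for $L_X$ (up to a sign and an additive multiple of $\Id$, immaterial by Remark \ref{rem:1}).

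The step I expect to require actual work --- indeed the only nonroutine one --- is the claim that $g_e$ (resp. $g_s$) is of the form \eqref{eq:LC} in the Jacobi elliptic (resp. sphero-conical) coordinates, equivalently that these coordinates separate variables for the corresponding geodesic Hamiltonian. This is classical, going back in essence to Jacobi and Levi-Civita, and is exactly the computation underlying the separation of variables for geodesics on quadrics and for the Neumann system; its heart is the cancellation of the mixed differentials mentioned above, which also shows that a single function $f$ of one variable suffices. Granting it, both $g_e$ and $g_s$ admit a differentially nondegenerate $L$ solving \eqref{eq:M0}, which is what Fact \ref{fact:1} asserts.
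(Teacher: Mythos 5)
Your construction follows essentially the same route as the paper's own proof: pass to the ellipsoidal coordinates \eqref{eq:elliptic} (resp.\ the sphero-conical coordinates \eqref{eq:spheroelliptic}), observe that $g_e$ (resp.\ $g_s$) takes the Levi-Civita form \eqref{eq:LC} with $f$ as in \eqref{eq:fellipsoid} (resp.\ \eqref{eq:fsphere}), and take $L=L_{\textrm{diag}}=\mathrm{diag}(q_1,\dots,q_N)$, whose eigenvalues are pairwise distinct on an open dense set, which gives (differential) nondegeneracy there. The classical input you ``grant'' --- that these coordinates bring the metrics to the form \eqref{eq:LC} with a single function $f$ --- is exactly what the paper also states without computation, so on that point you are on equal footing.

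The one place where your argument falls short of the statement is globality. The coordinate construction produces $L$ only on the open dense set where the coordinates are defined and the $q_i$ are pairwise distinct, whereas Fact \ref{fact:1} asserts the existence of a tensor $L$ on the whole ellipsoid/sphere, differentially nondegenerate \emph{almost everywhere}; and this globality is used later (see the remark after Lemma \ref{lem:3}: the points where the sphero-ellipsoidal coordinates break down are precisely the points where $L$ has multiple eigenvalues, and $L$ must exist there). The paper closes this gap by a short separate argument: \eqref{eq:M0} is a PDE system of finite type, the metrics are real-analytic and the manifolds simply connected, so local existence of $L$ implies global existence. Your coordinate-free operator on the sphere, $L_Xv=Av-\langle Av,X\rangle X$, does repair this for $g_s$ --- it is manifestly globally defined, and your identification of its eigenvalues with the sphero-conical coordinates is correct --- but the verification of \eqref{eq:M0} for it is asserted rather than carried out, and for the ellipsoid you offer no global formula at all (one exists in Cartesian coordinates; the paper points to \cite{MT97,MT01}). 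So either supply the extension argument (finite type plus analyticity, as in the paper) or the explicit global formulas with the verification of \eqref{eq:M0}; as written, the ellipsoid case establishes $L$ only on the coordinate chart.
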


For the standard sphere, Fact  \ref{fact:1} follows from a  result of  E.  Beltrami\footnote{who constructed examples of metrics  geodesically equivalent to the metric of the sphere}, see e.g. \cite{Matveev06}.  For the ellipsoid, Fact \ref{fact:1} was independently and almost simultaneously  obtained in \cite{MT97,Tabachnikov}, see also \cite{MT01}. 

Let us give a short proof of Fact \ref{fact:1}, whose ingredients will be useful  later. We start with the ellipsoid 
and consider  the ellipsoidal   coordinates $q_1, ..., q_{N}$   related to the standard coordinates $X_1,...,X_{N+1}$ in the  quadrant $\{ (X_1,\dots, X_N) \in  \mathbb{R}^{N+1}  \mid \varepsilon_1 X_1>0,\dots, \varepsilon_{N+1} X_{N+1}>0\} $, where $\varepsilon_i\in \{-1,1\}$,  
by the formula  
\begin{equation}\label{eq:elliptic}
X_i=\varepsilon_i \sqrt{a_i\frac{\prod_{j=1}^N(a_i-q_j)}{\prod_{j=1, j\ne i}^{N+1 }(a_i-a_j)}}. 
\end{equation}

In the ellipsoidal  coordinates,   
  the metric $g_e$ of the ellipsoid  
has the  form  \eqref{eq:LC} with 
\begin{equation}f(t)=-\tfrac{4}{t}{\prod_{j=1}^{N+1}(t- a_j) }. \label{eq:fellipsoid}\end{equation}  Then,  $L= \textrm{diag}(q_1,\dots, q_N)$ is a solution of \eqref{eq:comp}. 

The calculations have  shown the local existence of  such $L$ in   a  neighborhood of a point where elliptic coordinates are defined. Clearly,  the ellipsoid is simply-connected and   the metric of ellipsoid is real analytic. Next, recall that \eqref{eq:comp}  viewed as a system of PDEs   on  $L$ is  of finite type\footnote{It closes  after 2 prolongations, see e.g. \cite{Eastwood}}. Then,    the local existence implies the global existence.

See \cite[\S 7]{MT97}
and \cite[\S 7]{MT01} for the formulas for the geodesically equivalent metric and the (1,1)-tensor $L$ in the coordinates $X_1,\dots, X_{N+1}$.
    
Later, we will use also the formula, in the ellipsoidal  coordinates,  for the metric geodesically equivalent to the metric of the ellipsoid. In view of 
\eqref{eq:barg}, it is given by
\begin{eqnarray}
    \label{eq:barge1}
     \bar g_e &= &  -\tfrac{1}{4} \sum_{i=1}^N \left( {q_i^{N-2}\left(\ddd q_i\right)^2} \prod_{s=1, s\ne i}^{N} \tfrac{(q_i-q_s)}{q_iq_s} \prod_{s=1}^{N+1}\tfrac{1}{(a_s-q_i) } \right) \\ &=& 
                 -\tfrac{1}{4} \prod_{s=1}^{N+1}a_i  \sum_{i=1}^N {q_i}\left(  \left(\ddd \tfrac{1}{ q_i}\right)^2 \prod_{s=1, s\ne i}^{N} \left(\tfrac{1}{q_s}- \tfrac{1}{q_i}\right)\prod_{s=1}^{N+1}\tfrac{1}{\tfrac{1}{q_i}-\tfrac{1}{a_s}} \right)      \label{eq:barge2} \\ &=& 
                 -\tfrac{1}{4}  \prod_{s=1}^{N+1}\tfrac{1}{\bar a_i}  \sum_{i=1}^N\left(\ddd y_i\right)^2\left(\tfrac{1}{y_i}\prod_{s=1}^{N+1}\tfrac{1}{y_i-\bar a_s}\right)  \left( \prod_{s=1, s\ne i}^{N} \left(y_s- y_i\right) \right),       \label{eq:barge3} 
\end{eqnarray}
where the new coordinates $y_i$ are given by    $y_{i}=\tfrac{1}{q_i}$ and $\bar a_i= \tfrac{1}{a_i}.$ We see   that the metric \eqref{eq:barge3} is the metric \eqref{eq:LC} corresponding  to the functions 
\begin{equation}\label{eq:fbarge}
    f(t)=  -{4} t \prod_{s=1}^{N+1} (t-\bar a_s) \prod_{s=1}^{N+1}\bar a_s.  
\end{equation}

Similarly, for the standard sphere consider the sphero-ellipsoidal coordinates\footnote{Also called sphero-conical coordinates. }
related to standard coordinates $X_1,...,X_{N+1}$ in $\mathbb{R}^{N+1}$
 by 
\begin{equation}\label{eq:spheroelliptic}
X_i=\varepsilon_i \sqrt{\frac{\prod_{j=1}^{N}(a_i-q_j)}{\prod_{j=1, j\ne i}^{N+1}(a_i-a_j)}}. 
\end{equation}

In these coordinates, the standard metric of the   sphere has the form  \eqref{eq:LC} with  \begin{equation} \label{eq:fsphere} f(t)=-4{\prod_{j=1}^{N+1}(t-a_j) }. \end{equation}
We again see that $L= \textrm{diag}(q_1,\dots,q_N)$    satisfies \eqref{eq:M0}.

Now, by direct calculations we see that the potential energy  of the Neumann system  has the form 
\begin{equation}  \label{eq:potsep}
  \sum_{i=1}^{N+1} X_i^2a_i= \tfrac{1}{2}\left( \sum_{i=1}^{N+1} a_i - \sum_{i=1}^{N}q_i\right). 
\end{equation}
By direct calculations we see that it corresponds to the function\footnote{The second term $t^{N-1} \sum_{i=1}^{N+1}a_i $ in  \eqref{eq:Usphere}
can be ignored as it corresponds to the addition of the  constant   $\sum_{i=1}^{N+1} a_i$ in \eqref{eq:potsep}  and does not change the equations of motion}  \begin{equation} U(t)= -t^{N} +t^{N-1} \sum_{i=1}^{N+1}a_i .\label{eq:Usphere} \end{equation}

Let us summarise the content of this section. We recalled that two  finitely-dimensional systems of interest, the geodesic flow of the metric geodesically equivalent to the metric of the ellipsoid, and the Neumann system, belong to the class of Benenti systems. They correspond to the following 
functions $f, U$ used in  \eqref{eq:st3}: The geodesic flow of the metric geodesically equivalent to the metric of the ellipsoid  corresponds to  $f$ given by \eqref{eq:barge1}  and to $U= 0$, and the Neumann system corresponds to $f$ given by \eqref{eq:fsphere} and to $U$ given by \eqref{eq:Usphere}.
\begin{remark}\label{rem:2}
 Let us observe from \eqref{eq:potsep} that  the potential energy corresponding to the Neumann system is equal to $\textrm{const}- \textrm{trace}(L)$. In   coordinates such that $L$ is given by \eqref{eq:comp}, trace of $L$ clearly  equals  $-w_1$.
\end{remark}
\begin{remark}\label{rem:2a}
 The range of the coordinates $(q_1,\dots, q_N)$, both in the case of ellipsoidal coordinates and sphero-ellipsoidal coordinates, is given by $a_1<q_1<a_2<q_2<\cdots<q_N<a_{N+1}$.
\end{remark}

\subsubsection{KdV equation  as BKM  system with  n=1 and its  finite-dimensional reductions. }
\label{sec:kdv}

The KdV equation is the following partial differential equation on the unknown function $u$ of two variables $x$ and $t$:
\begin{equation}
   \label{eq:kdv} 
   u_t= -\tfrac{1}{2}u_{xxx} + \tfrac{3}{2} u u_{x}. 
\end{equation}

There are different almost equivalent forms of the  equation. For example, by re-scaling of $x$ and $t$ one can change the coefficients on the right hand side of the equation, and by multiplying $u$ and $t$ by 
$-1$ one can change the signs on the right hand side. In particular, the 
  version of the KdV equation staying in  \cite{M83} and in  Wikipedia is 
\begin{equation}
\label{eq:kdv1} 
   u_t= -u_{xxx} + 6 u u_{x}. 
\end{equation}

Note also that  the transformation  $u_{new}= u_{old} + \textrm{const}$ makes from \eqref{eq:kdv} the equation 
\begin{equation}
   \label{eq:kdv2} 
   u_t= -\tfrac{1}{2}u_{xxx} + \tfrac{3}{2} u u_{x} + \textrm{const}\,\tfrac{3}{2} u_{x} . 
\end{equation}
Similarly the transformation $u_{new}(x,t)= u_{old}(x  +\textrm{const} \, t, t) $ makes from  \eqref{eq:kdv} the equation 
\begin{equation}
   \label{eq:kdv3} 
   u_t= -\tfrac{1}{2}u_{xxx} + \tfrac{3}{2} u u_{x} -  \textrm{const}  u_{x} . 
\end{equation}
Combining \eqref{eq:kdv2} with \eqref{eq:kdv3}, we see that by the appropriate choice of constants  the transformations compensate one another.

One may view  the KdV equation, and actually all BKM systems, as dynamical systems on the  space of real-analytic curves. Indeed, for a real-analytic function  $x\mapsto u(x)$, the solution $u(x,t)$
of the KdV equation  such that $u(x,0)= u(x)$  exists and is unique by the Kovalevskaya Theorem and  can be viewed as a family of functions  $x\mapsto u(x,t)$  depending on the parameter $t$.

By {\it a finite-dimensional reduction}  of the KdV equation one understands 
a finite-dimensional family of functions  $x \mapsto u(x)$ such that the family is invariant with respect to the dynamical system above\footnote{In other words, if $u(x)$ belongs to this family, then for  the solution $u(x,t)$ of the KdV equation such that $u(x,0)= u(x)$, the curves $x\mapsto u(x,t)$ lies in this family for any sufficiently small $t$}.   This notion naturally  generalises to the BKM systems.

The  finite-dimensional reduction of the KdV equation related to the topic of the paper was suggested in \cite{novikov74}, where S. Novikov  considered the so called {\it stationary solutions}; we will call it {\it stationary reduction}. The corresponding family of functions  is   defined as follows.

It is known, that the KdV equation admits symmetries, that are partial differential equations of the form 
\begin{equation}
   \label{eq:symmetry} 
   u_t= B[u],  
\end{equation}
where $B[u]$ is a differential polynomial, that is, a polynomial in $  u_x:=\tfrac{\partial u}{\partial x}$, $u_{xx}:=\tfrac{\partial u}{\partial x} $ etc.  satisfying the following property:  
 The system of equations 
\begin{equation}
       u_t= -\tfrac{1}{2}u_{xxx} + \tfrac{3}{2} u u_{x} \  \ \textrm{and} \ \        u_\tau= B[u]  , 
\end{equation}
 viewed as a system of equations on the unknown function $u(x,t, \tau)$ is compatible.

It is known and easy to see that  the KdV equation has a  trivial symmetry of  differential degree $1$
\begin{equation}
    \label{eq:trivial} u_t=u_x. 
\end{equation}
The right hand side of this symmetry  will be denoted by $B_0$, so the trivial symmetry reads  $u_t= B_0[u]$.

The KdV equation  itself is also a symmetry of differential degree 3, we denote its right hand side by $B_1$. The next symmetry  has differential degree $5$, we denote the corresponding right hand side by $B_2$ and so on, so the $N$th nontrivial symmetry, whose right hand side is denoted by $B_N$,  has differential degree $2N+1$.

It is known that for any $i,j$ the  differential symmetry  constructed by $B_i$  is symmetry for the differential symmetry constructed by $B_j$.

As a finite-dimensional family of curves, S. Novikov  \cite{novikov74} considered the family of solutions of  the 
ordinary differential equation on $u$ given by 
\begin{equation} \label{eq:sym} B_N[u]+ \lambda_1 B_{1}[u]+...+\lambda_{N-1} B_{N-1}[u] =0   \end{equation}  with constant coefficients $\lambda_1,\dots, \lambda_{N-1}$.  Because for every $i,j$ the  differential symmetry  constructed by $B_j$ is a symmetry of that 
constructed by  $B_i$, the family of curves  is invariant with respect to the flow of all differential symmetries. Indeed,  if we take  a solution $u(x,t)$ of the PDE $u_t= B_i[u]$ such that the curve $x\mapsto u(x,0)$ is a solution of $B_N[u]=0$, 
then for any $t$ the curve $x\mapsto u(x,t)$  is a solution of $B_N[u]=0$, so the family of curves is invariant with respect to the flow of \eqref{eq:kdv}. For the fixed choice of $\lambda_1,\dots ,\lambda_{N-1}$, the dimension of this family of the curves is $2N+1$, as $2N+1$  initial data  
determine  the solution of ODE of degree $2N+1$ on one unknown function $u(x)$. 
If we identify curves of the form $u(x)$ and $u(x +\textrm{const})$, the dimension of the family is $2N$, it coincides with the dimension of the cotangent space to the $N$-dimensional manifold.  The solutions of the KdV equations such that for any $t$ the curve $x\mapsto u(x,0)$ solves the ODE \eqref{eq:sym} are called {\it  stationary solutions}\footnote{There exist different nonequivalent definitions of stationary solutions of KdV in the literature. In particular, in   \cite{BSM23} a much more restrictive stationarity condition $B_N[u]=0$ is considered}.

As explained in \cite[Example 3.1]{nijapp4}, KdV equation in the form \eqref{eq:kdv} is the BKM IV system corresponding to a special choice of parameters.   In \cite{BKMreduction}, a finite-dimensional reduction of BKM systems is  studied.  Though visually the finite-dimensional reduction procedure in \cite{BKMreduction} is  different from that of in e.g. \cite{novikov74}, 
it appears that for the KdV systems the finite-dimensional reduction constructed in \cite{BKMreduction} coincides with the one considered in \cite{novikov74}.

The following results of  \cite{BKMreduction} are relevant to the present paper. The parameters of the finite-dimensional reduction from  \cite{BKMreduction}
are the number $N$ (which has nothing to do with the natural number  $N$ from the construction of BKM systems) and a monic polynomial \begin{equation} \label{eq:monic} 
C(\mu)=\mu^{2N+1} + 0 \mu^{2N} + c_2\mu^{2N-1} + \cdots+ c_{2N+1}   \end{equation}
of degree $2N+1$, whose second highest coefficient is zero.    The reduction is constructed as follows: consider the following polynomial in $\mu$ whose coefficients  depend  on $x$:
\begin{equation} \label{eq:11a} w(x;\mu) = \mu^N + \mu^{N-1} w_1(x)+\mu^{N-2} w_2(x)+\cdots+ w_N(x)  
\end{equation}
Next, consider the system of ODEs on the functions $w_1,\dots, w_n$ given by  
\begin{equation} \label{eq:base1}
 {C(\mu)}=  {m_0}\left( w_{xx} w - \tfrac{1}{2}w_x^2\right)+(\mu-\tfrac{1}{2}w_1) w^2.   
\end{equation}
We observe that both sides of   \eqref{eq:base1} are  polynomial in $\mu$ of degree $2N+1$  such that  the free terms  and linear terms  on the right and left side coincide, so \eqref{eq:base1}   is a system of $2N$   ordinary differential equations of the second order  on $n$ unknown functions $w_1,...,w_N$.   The system depends on $2N$ parameters $c_2, c_3,\dots, c_{2N+1}$ and on the parameter $m_0$. 
\footnote{
Though a generic system of $2N$  ordinary differential equation on $N$ functions does not have a solutions, the system  \eqref{eq:base1} can be solved for any values of the parameters $c_2, c_3,\dots, c_{2N+1}$ 
and the solution depends on the choice of initial values $w_1(x_0),...,w_N(x_0)$, so for fixed initial constants $c_2,\dots, c_{2N+1}$ the family of functions  is, locally,  $N$-dimensional. In fact, solutions of the system are trajectories of a Lagrangian system whose Lagrangian is the sum of kinetic energy  coming from a  flat metric and the  potential energy of a special form.   
}

Next, for any coefficients  $c_2, c_3,\dots, c_{2N+1}$ consider the following family of 1-dimensional curves given by  $x\mapsto u(x)= 2 w_1(x)$, where $w_1(x)$  comes from  a  solution of \eqref{eq:base1}. 

\begin{theorem}[\cite{BKMreduction}] \label{thm:1}
    For every coefficients  $c_2, c_3,\dots, c_{2N+1}$, the family of the curves above  is invariant with respect to the flow of $KdV$.
\end{theorem}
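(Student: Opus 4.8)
I would prove the theorem via the classical finite-gap dictionary, reduced to what is already recalled in \S\ref{sec:kdv}. The core claim to establish is: \emph{for every $c_2,\dots,c_{2N+1}$ there are constants $\lambda_1,\dots,\lambda_{N-1}$ such that every curve $u(x)=2w_1(x)$ coming from a solution of \eqref{eq:base1} solves the stationary equation \eqref{eq:sym}}. Since \S\ref{sec:kdv} already tells us that the solution set of \eqref{eq:sym} is invariant under the flow of each symmetry $B_i$ --- in particular of $B_1$, which is $KdV$ --- it then only remains to see that the extra constraints in \eqref{eq:base1} (the ones that fix $C(\mu)$ rather than just $\lambda_1,\dots,\lambda_{N-1}$) are themselves $KdV$-invariant; this holds because they amount to fixing the values of first integrals of \eqref{eq:sym} that are part of the spectral data of the associated Schr\"odinger operator, and $KdV$ is isospectral. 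Thus the family of \eqref{eq:base1} with a fixed $C(\mu)$ is an intersection of $KdV$-invariant sets, which is the assertion of the theorem. This is exactly the identification of the two finite-dimensional reductions announced in the paragraph preceding the statement.

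To prove the core claim I would proceed as follows. Differentiating \eqref{eq:base1} in $x$, using $\partial_x C(\mu)=0$, and cancelling the common factor $w$ (legitimate as an identity of polynomials in $\mu$) turns the second-order system \eqref{eq:base1} into a third-order scalar relation, linear in $w$, of the form $m_0 w_{xxx}=a(x;\mu)\,w_x+b(x)\,w$ with $a$ affine in $\mu$ and $b$ proportional to $w_{1,x}$; conversely, \eqref{eq:base1} is the (essentially unique) first integral of this relation, and checking that the first integral is $x$-independent, so that it legitimately equals the constant polynomial $C(\mu)$, is a one-line computation. After the rescalings of $\mu$ and of $u:=2w_1$ dictated by the normalization \eqref{eq:kdv}, this relation is the classical squared-eigenfunction (Lam\'e-type) equation of the Schr\"odinger operator with potential $u$: its solutions $w(\,\cdot\,;\mu)$ are spanned by products of two solutions of the Schr\"odinger equation with spectral parameter $\propto\mu$, and $C(\mu)$ is, up to a constant, the square of the Wronskian of the two Floquet solutions, i.e.\ the polynomial with roots at the band edges. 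Expanding $w=\mu^N+w_1\mu^{N-1}+\cdots+w_N$ and reading the third-order relation coefficient-by-coefficient in $\mu$, one finds that each coefficient-equation produces the next $w_j$ from the previous one through the standard Lenard recursion, so $w_1,\dots,w_N$ are, modulo the constants $c_\bullet$, the first $N$ Gelfand--Dikii differential polynomials in $u$, and the two remaining ``closing'' coefficient-equations are precisely \eqref{eq:sym}, with $\lambda_i$ explicit polynomials in $c_2,\dots,c_{2N+1}$; the values of the remaining first integrals of \eqref{eq:sym} along these solutions are read off similarly.

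The main obstacle is this identification step itself --- verifying that the algebra encoded in \eqref{eq:base1} really is the Lenard/Gelfand--Dikii recursion and that its closure reproduces \eqref{eq:sym} and nothing weaker. This is the substantive part and is in essence a restatement of Novikov's theorem; the Benenti/Nijenhuis description of \eqref{eq:base1} --- the flat pair \eqref{eq:comp} with a polynomial potential, cf.\ the footnote to \eqref{eq:base1} --- makes the $x$-dynamics transparent but does not by itself deliver $KdV$-invariance, which genuinely uses the structure of the $KdV$ hierarchy. Everything else is routine bookkeeping of constants, including the check that the normalizations ``$C(\mu)$ monic with zero $\mu^{2N}$-coefficient'' in \eqref{eq:monic} and \eqref{eq:kdv} are mutually consistent, any residual discrepancy being absorbed by the translations \eqref{eq:kdv2}--\eqref{eq:kdv3}. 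As an alternative to the concluding step, one can stay inside the Benenti picture: on the family, the $x$-flow is the Hamiltonian flow of the Benenti integral $\tilde I_0$ of \eqref{eq:comp}, and one shows that the $KdV$ flow restricted to the family is, up to a constant, the Hamiltonian flow of the commuting integral $\tilde I_1$; since $\tilde I_0$ and $\tilde I_1$ Poisson-commute these flows commute, so the $t$-flow sends $x$-trajectories of $\tilde I_0$ to $x$-trajectories of $\tilde I_0$ and hence preserves the family --- the only point to verify being that the $w_1$-component of the Hamiltonian vector field of $\tilde I_1$ equals $-\tfrac12 u_{xxx}+\tfrac32 u u_x$ for $u=2w_1$, again a computation in the coordinates of \eqref{eq:comp} using the third-order relation above.
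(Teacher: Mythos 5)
Your proposal does not follow the paper, because the paper contains no proof of Theorem \ref{thm:1} at all: it is imported from \cite{BKMreduction}, where it is established for all BKM systems by the finite-dimensional-reduction machinery (cf.\ Remark \ref{rem:4}: the $t$-dependence is realized as the flow of one of the commuting integrals of the Benenti system). Your primary route is instead the classical, KdV-specific one: differentiate \eqref{eq:base1}, divide by $w$, recognize the resulting third-order relation as the squared-eigenfunction equation, read off the Lenard/Gelfand--Dikii recursion coefficient by coefficient, and conclude that $u=2w_1$ solves Novikov's stationary equation \eqref{eq:sym}. This is legitimate and, as you yourself note, amounts to re-proving Novikov's identification, which the paper only states as a remark after Theorems \ref{thm:1}--\ref{thm:2}. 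The BKM route buys uniformity over all BKM systems and independence of the hierarchy formalism; your route buys a direct link to \eqref{eq:sym} and to the spectral curve. The algebra does come out as you predict in the intended normalization: the $\mu^{N-1}$-coefficient of the differentiated relation gives $2w_{2,x}=-m_0 w_{1,xxx}+6\,w_1 w_{1,x}$, i.e.\ exactly the right-hand side of \eqref{eq:kdv} for $u=2w_1$, $m_0=1$ (note that consistency of the $\mu^{2N}$-coefficient of \eqref{eq:base1} with the normalization $0\cdot\mu^{2N}$ in \eqref{eq:monic} requires reading the factor as $(\mu-2w_1)=(\mu-u)$; with the printed $\tfrac12 w_1$ that coefficient would force $w_1\equiv 0$, so this really is part of the constant-bookkeeping you flag).

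One genuine caution concerns the step handling the fixed-$C$ constraints. Arguing that ``the extra constraints fix spectral data and KdV is isospectral'' presupposes that $c_{N+2},\dots,c_{2N+1}$ (equivalently, the values of the first integrals of \eqref{eq:sym}) are functions of the spectrum of $-\partial_x^2+\tfrac12 u$; within this paper that identification is precisely what Theorem \ref{lem:p3} later establishes, using Theorem \ref{thm:1} as input, so as written your argument is circular relative to the paper's logical order (it is fine if you import Novikov--Dubrovin theory wholesale, but then you assume considerably more than the statement being proved). Either verify directly that the $t$-derivative of each coefficient of $C$, expressed through the first integral $m_0\bigl(w_{xx}w-\tfrac12 w_x^2\bigr)+(\mu-u)w^2$, vanishes along the KdV flow on stationary solutions, or use your own alternative ending: the $x$-flow is the flow of $\tilde I_0$, the $t$-flow is (after the $w_1$-component computation you single out) the flow of $\tilde I_1$, and since the $\tilde I_i$ Poisson-commute, the $t$-flow maps $\tilde I_0$-trajectories to $\tilde I_0$-trajectories and preserves all integral values, hence the fixed-$C$ family. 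That alternative is essentially the argument of \cite{BKMreduction}, and with it your proposal closes the gap.
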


This gives us a finite-dimensional reduction of the KdV equation. Actually, the result of \cite{BKMreduction} is valid for all BKM systems, not only for the KdV. For the KdV systems   the reduction is essentially the stationary reduction  to the one constructed in   \cite{novikov74}. Namely, for every  function   $x\mapsto u(x)$ from the  family corresponding  to  the finite-dimensional reduction  from 
\cite{novikov74},  there exist constants $c_2,\dots, c_{2N+1}$ such that the function  lies in the family described above.

The next result of \cite{BKMreduction} which will be used in the present paper, and   which also explains why  the visually overdetermined system of equations  \eqref{eq:base1} has solutions, is as follows: 

We consider the Benenti system  constructed by $(g, L_{\textrm{comp}})$ from \eqref{eq:comp},
so the function $f(t)=1$,   and by the function\footnote{The function $U(t)$ depends on the  choice of parameters $c_2,\dots, c_{N+1}$}
\begin{equation} \label{form:U(t)}
    U(t)=  \tfrac{1}{m_0}\left(t^{2N +1}+   \sum_{i=2}^{N+1} c_{i} t^{2N-i+1}\right) =   \tfrac{1}{m_0}\left( C( t)- c_{N+2}t^{N-1}- c_{N+3}t^{N-2}-\cdots- c_{2N+1}\right). 
\end{equation}

\begin{theorem}[\cite{BKMreduction}] \label{thm:2}
    For every  parameters  $c_2,\dots, c_{N+1}$, the trajectories  of this Benenti systems viewed as functions   $x\mapsto w(x)$   on $\mathbb{R}^N$ are solutions of \eqref{eq:base1} corresponding to the polynomial $C$. Moreover,  any solution of  \eqref{eq:base1} is a trajectory of this Benenti system. 
\end{theorem}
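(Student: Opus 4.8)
The plan is to pass to the coordinates $q_1,\dots,q_N$ in which the pair $(g,L_{\mathrm{comp}})$ of \eqref{eq:comp} takes the Levi--Civita normal form \eqref{eq:LC}, \eqref{eq:Ldiag} with $f\equiv1$; by \eqref{eq:trans}, \eqref{eq:uq} the $q_i$ are exactly the roots of the polynomial $w(x;\cdot)$ of \eqref{eq:11a}, i.e.\ $w(x;\mu)=\det(\mu\,\Id-L)=\prod_{i=1}^N(\mu-q_i(x))$, so a trajectory $x\mapsto w(x)$ of the Benenti system is the same object as the motion $x\mapsto(q_1(x),\dots,q_N(x))$ of these roots. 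The reason for choosing these coordinates is that the system separates, after which the relation to \eqref{eq:base1} becomes a computation with symmetric functions.

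First I would write the equations of motion in separated form. By Lemma \ref{lem:potential} the commuting integrals are $\tilde I_k=I_k+U_k$, and \eqref{eq:st3} with $f\equiv1$ states $\sum_{k=0}^{N-1}q_i^{\,N-1-k}\tilde I_k=-\tfrac12 p_i^2+U(q_i)$ for $i=1,\dots,N$. Along a trajectory each $\tilde I_k$ is a constant $E_k$, whence $p_i^2=R(q_i)$ for a \emph{single} polynomial $R(\mu)$ of degree $2N+1$ built out of $U$ and the $E_k$: its coefficients in degrees $\ge N$ come from $U$ (hence, by \eqref{form:U(t)}, from $m_0,c_2,\dots,c_{N+1}$, and with no $\mu^{2N}$ term), and its coefficients in degrees $<N$ are the $E_k$ up to sign. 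The Hamiltonian is $\tilde I_0=\tfrac12\sum_i\bigl(\prod_{s\ne i}(q_i-q_s)\bigr)^{-1}p_i^2+U_0$, so $\dot q_i=\bigl(\prod_{s\ne i}(q_i-q_s)\bigr)^{-1}p_i$, and the trajectory is governed by the Dubrovin-type equations $\dot q_i^{\,2}\prod_{s\ne i}(q_i-q_s)^2=R(q_i)$, $i=1,\dots,N$.

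Next I would convert these equations into the polynomial identity \eqref{eq:base1} for $w(x;\mu)=\prod_i(\mu-q_i)$. From $w_x(x;\mu)=-\sum_i\dot q_i\prod_{k\ne i}(\mu-q_k)$ one gets $w_x(x;q_i)^2=\dot q_i^{\,2}\prod_{k\ne i}(q_i-q_k)^2=R(q_i)$, and since also $w(x;q_i)=0$, the left-hand side $\Phi(x;\mu)$ of \eqref{eq:base1} takes at each root the value $-\tfrac{m_0}{2}R(q_i)$. To upgrade this to an identity of polynomials in $\mu$ I would show $\partial_x\Phi\equiv0$: using $\partial_x\!\bigl(w_{xx}w-\tfrac12 w_x^2\bigr)=w_{xxx}w$ and substituting the Dubrovin equations, $\partial_x\Phi$ turns out to be $w(x;\mu)$ times a polynomial in $\mu$ that vanishes identically. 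Hence $\Phi$ is a polynomial in $\mu$ alone; comparing with the information collected above, it is monic of degree $2N+1$ with vanishing $\mu^{2N}$-coefficient, its coefficients in degrees $\ge N$ reproduce $m_0U(\mu)$ — i.e.\ the $c_2,\dots,c_{N+1}$ of \eqref{form:U(t)} — and its lower coefficients record the $E_k$. Thus $\Phi(x;\mu)=C(\mu)$ for the polynomial $C$ of \eqref{eq:monic} with exactly these coefficients, which is the first assertion.

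For the converse I would reverse the steps. Given a solution $w(x)$ of \eqref{eq:base1} with data $C$, read $U$ off the degree-$\ge N$ part of $C$ via \eqref{form:U(t)} and constants $E_k$ off its degree-$<N$ part, set $R(\mu)=2U(\mu)-2\sum_k E_k\mu^{\,N-1-k}$, and let $q_i(x)$ be the roots of $w(x;\cdot)$. Evaluating \eqref{eq:base1} at $\mu=q_i$ gives the Dubrovin equations $\dot q_i^{\,2}\prod_{s\ne i}(q_i-q_s)^2=R(q_i)$, and differentiating \eqref{eq:base1} once more (in $\mu$, then restricting to $\mu=q_i$) supplies the relation that fixes $\ddot q_i$; together these are the second-order (Euler--Lagrange) form of Hamilton's equations for the Benenti system with this $U$, with the integrals taking the values $E_k$, so $w(x)$ is a trajectory. (Alternatively, one can simply count dimensions: for fixed $C$ the solutions of \eqref{eq:base1} and the Benenti trajectories on the level set $\{\tilde I_k=E_k\}$ each form $N$-dimensional families, and the forward direction embeds one into the other.) I expect the genuinely laborious point to be in the third paragraph: verifying that the \emph{particular} combination in \eqref{eq:base1}, with its precise numerical coefficients, is the $x$-invariant one and matches $m_0U$ and the $E_k$ exactly — a bounded but delicate computation with the elementary-symmetric-function identities for $\prod(\mu-q_i)$ and its $x$-derivatives, in which the normalisation $f\equiv1$ and the exact shape of \eqref{form:U(t)} (signs and numerical factors included) get pinned down.
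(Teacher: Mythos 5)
Note first that the paper itself gives no proof of Theorem \ref{thm:2}: it is imported verbatim from \cite{BKMreduction}, so there is no in-paper argument to compare yours with. Your strategy is the natural (and essentially the standard finite-gap) one: identify the roots $q_i(x)$ of $w(x;\mu)$ with the eigenvalues of $L_{\textrm{comp}}$ via \eqref{eq:trans}--\eqref{eq:uq}, use the separated relations \eqref{eq:st3} with $f\equiv 1$ to get $p_i^2=R(q_i)$ for one polynomial $R$ of degree $2N+1$, pass to Dubrovin-type equations $\dot q_i^{\,2}\prod_{s\ne i}(q_i-q_s)^2=R(q_i)$, and then promote the evaluation at the $N$ roots to a polynomial identity by showing that the right-hand side of \eqref{eq:base1} is $x$-independent along trajectories. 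For the converse, the parenthetical dimension count is not by itself conclusive (two $N$-dimensional families with an injection between them need not coincide); what closes it is an initial-value uniqueness statement for \eqref{eq:base1} (e.g.\ the top coefficients can be solved for $w_{xx}$, so a solution is determined by $w(0)$ and the constrained $w_x(0)$), combined with the forward direction.

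The genuine gap is that the step you explicitly defer --- verifying that the specific combination in \eqref{eq:base1}, with its precise coefficients, is the $x$-invariant one and matches $m_0U$ and the constants --- is exactly where the content lies, and taken verbatim it does not close. Concretely: the coefficient of $\mu^{2N}$ on the right of \eqref{eq:base1} is $2w_1-\tfrac12 w_1=\tfrac32 w_1$, while $C$ in \eqref{eq:monic} has no $\mu^{2N}$ term, so your assertion that $\Phi$ is monic with vanishing $\mu^{2N}$-coefficient fails; correspondingly $\partial_x\Phi=w\,\bigl[m_0w_{xxx}-\tfrac12 w_{1,x}w+2(\mu-\tfrac12 w_1)w_x\bigr]$, whose bracket has $\mu^N$-coefficient $\tfrac32 w_{1,x}$ and hence cannot vanish identically unless $w_1$ is constant. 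The computation only closes for the normalization $\sigma=\mu-2w_1$, which is the one consistent with $u=2w_1$ and \eqref{eq:SL}; as printed, \eqref{eq:base1} carries a normalization slip that your proof must resolve rather than inherit. Similarly, with the sign convention of \eqref{eq:st3} your $R=2U-2\sum_k E_k\mu^{N-1-k}$ makes $-\tfrac{m_0}{2}R$ have leading coefficient $-1$, which cannot equal the monic $C$; the convention of \eqref{eq:legendre-1} (sign of $U$ and of the integral constants, cf.\ $\tilde I_k=-c_{N+2+k}$) is the one that matches. So the skeleton is right and would constitute a self-contained proof of the cited theorem, but until these coefficient and sign checks are actually carried out with the corrected normalization, the proposal remains a plan rather than a proof.
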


Let us give additional explanations on the relations between parameters  $c_2,\dots, c_{N+1}$ of the potential energy in the Benenti system and the parameters $c_2, \dots, c_{2N+1}$
of \eqref{eq:base1}. The parameters of the Benenti system are the first $N$ parameters of \eqref{eq:base1}. The additional $N$  parameters  $ c_{N+2},\dots, c_{2N+1}$ can be thought as the  values of the integrals $\tilde I_0,...,\tilde I_{N-1}$ corresponding to the Benenti system, so a solution of \eqref{eq:base1} corresponding to the parameters $c_2,\dots, c_{2N+1}$ corresponds to a trajectory of the Benenti system corresponding to the parameters $c_2,\dots, c_{N+1}$ such that  the values of the integrals $\tilde I_0,\dots,\tilde I_{N-1}$ on this trajectory are $\tilde I_0=-  c_{N+2}, \tilde I_1= -  c_{N+1},\dots,  \tilde I_{N-1}= -  c_{2N+1}$.

Combining Theorems \ref{thm:1} and \ref{thm:2}, we see that,
for a fixed $t$,  
all  stationary solutions of KdV, 
viewed as the curves $x\mapsto u(x)$,  can be described as follows: we take  the Benenti system above, and for any    its solution 
$(w_1(x),...,w_N(x))$ set $u(x)= 2 w_1(x)$. The curves  $u(x)$ constructed by this method are stationary solutions of KdV and for any  stationary solution $u(x,t)$ of KdV the curve $x\mapsto u(x,t)$ can be obtained by this procedure.

\begin{remark} \label{rem:4}
Though the unknown function $u(x,t)$ in the  KdV equation depends on two variables, $x$ and $t$, the dependence on the variable $t$ was not used in   Theorems \ref{thm:1}, \ref{thm:2} and plays no role in our  paper. The dependence on $t$, for general BKM systems, was  understood as a part of  the finite-dimensional-reduction approach of \cite{BKMreduction}, and corresponds to the flow of one of the integrals. The KdV case was understood ways before. 

Though we do not use  the dependence of $t$ at all in the present paper,  let us  note that  for a solution $u(x,t)$ the   curves $x\mapsto u(x,t_1)$ and $x\mapsto u(x,t_2)$  will correspond to the same spectrum of the corresponding Schr\"odinger operator.  
\end{remark}
\section{When solutions of one Benenti system are solutions of another}

\begin{lemma} \label{lem:2}
  Consider two  families  of Benenti systems sharing the same $L$: 
 one constructed by $U_1(t)$  and $f_1(t)$, another by $U_2$ and $f_2$. Suppose for certain values of the integrals $\tilde I^1_0= H_0^1,...,\tilde I^1_{N-1}=H_{N-1}^1 $  of the first system and for certain values of the integrals  
$\tilde I^2_0= H_0^2,...,\tilde I^2_{N-1}=H_{N-1}^2 $  of the second system the products 
we have \begin{equation} \label{eq:condition} 
(-U_1(t) +  H_0^1 t^{N-1}+ \cdots + H_{N-2}^1 t  +  H_{N-1}^1  )f_1(t)=(-U_2(t) +  H_0^2 t^{N-1}+ \cdots + H_{N-2}^2 t  +  H_{N-1}^2  )f_2(t).
\end{equation} 
Then, every solution of the  first Benenti  system, viewed as a curve on $\mathbb{R}^N$, corresponding to the values of the integrals $H_0^1,\dots, H_{N-1}^1 $, is a solution of the second Benenti  system corresponding to the values of the integrals $H_0^2,\dots, H_{N-1}^2.$  
\end{lemma}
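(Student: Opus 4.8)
The plan is to work in the separating coordinates $(q_1,\dots,q_N)$ in which both Benenti systems are simultaneously in Levi--Civita/St\"ackel form (\ref{eq:LC}, \ref{eq:Ldiag}), which is legitimate since the two systems share the same $L$ and $L$ is differentially nondegenerate (hence gl-regular) almost everywhere; it suffices to prove the statement there and invoke analyticity/density. In these coordinates the integrals of the first system are given by the St\"ackel relation \eqref{eq:st3} with $f_1,U_1$, and similarly for the second with $f_2,U_2$. The key observation is that a level set $\{\tilde I^1_0=H^1_0,\dots,\tilde I^1_{N-1}=H^1_{N-1}\}$ of the first system translates, via \eqref{eq:st3}, into the $N$ scalar equations
\begin{equation} \label{eq:sepcurve1}
-\tfrac12 f_1(q_i) p_i^2 + U(q_i) = \sum_{k=0}^{N-1} H^1_k\, (q_i)^{N-1-k}, \qquad i=1,\dots,N,
\end{equation}
one for each $i$ — this is exactly the separation of variables: the momentum $p_i$ is a function of $q_i$ alone along the level set. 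Rearranging, $\tfrac12 p_i^2 = \bigl(-U_1(q_i)+\sum_k H^1_k (q_i)^{N-1-k}\bigr)/f_1(q_i)$, and the analogous relation holds for the second system with $f_2,U_2,H^2_k$. Hypothesis \eqref{eq:condition}, being an identity of polynomials in $t$, evaluated at $t=q_i$ gives precisely that these two expressions for $\tfrac12 p_i^2$ coincide for each $i$.

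Next I would argue at the level of the dynamics. A solution curve of the first Benenti system with those integral values is, in separated coordinates, a curve $x\mapsto (q_i(x),p_i(x))$ whose Hamiltonian vector field is generated by $\tilde I^1_0$; but in St\"ackel systems the Hamilton equations for each pair $(q_i,p_i)$ decouple up to a common reparametrisation (the St\"ackel matrix acts as the "change of time"), and the trajectory in configuration space is determined — up to reparametrisation — by the algebraic constraints \eqref{eq:sepcurve1}, i.e. by the graph $p_i = p_i(q_i)$ together with the monotonicity data. Since Lemma \ref{lem:2} speaks of solutions \emph{viewed as curves on $\mathbb{R}^N$} (unparametrised, or at least the $t$-parametrisation is irrelevant per Remark \ref{rem:4}), and since the $p_i(q_i)$-graphs for the two systems literally coincide by the computation above, the image curve $\{(q_1(x),\dots,q_N(x))\}$ satisfies the defining relations of the second Benenti system with integral values $H^2_0,\dots,H^2_{N-1}$. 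Hence it is (the configuration-space projection of) a solution of the second system, as claimed.

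The cleanest way to make the "same curve" step rigorous is probably to avoid the reparametrisation subtlety entirely: one can use Remark \ref{rem:1} only cosmetically, and instead note that the level set of all $N$ integrals is a Lagrangian submanifold which, in separated coordinates, is cut out by the $N$ equations \eqref{eq:sepcurve1}; equation \eqref{eq:condition} says this Lagrangian submanifold \emph{as a subset of $T^*\mathbb{R}^N$} is the same for both systems (on the locus where $f_1,f_2$ are nonzero). Since the solution curves of a Benenti system with fixed integral values are exactly the integral curves lying inside this common Lagrangian submanifold, and the Hamiltonian flows of $\tilde I^1_0$ and $\tilde I^2_0$ restricted to it are tangent to it, the orbits of the first system are reparametrised orbits of the second — this is where one uses that on a Lagrangian level set of a Liouville-integrable system the flows of any two of the commuting Hamiltonians have the same (one-dimensional, generically) orbits only after we pass to the joint level set of all $N$; more precisely the configuration-space projections coincide because $\tfrac12 p_i^2$ is the same function of $q_i$.

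I expect the main obstacle to be precisely this last point: disentangling "same unparametrised curve" from "same trajectory", and handling the degenerate loci where some $f_j(q_i)=0$ or where $q_i=q_j$ (turning points and coordinate-singularities of the elliptic-type coordinates). The polynomial identity \eqref{eq:condition} is robust under taking $t=q_i$ everywhere, so the algebra is routine; what needs care is (i) justifying the reduction to separated coordinates almost everywhere and then extending by real-analyticity of the solutions, and (ii) phrasing "solution viewed as a curve on $\mathbb{R}^N$" so that the statement is literally what the St\"ackel separation delivers. Everything else — writing \eqref{eq:sepcurve1}, substituting, comparing with the second system — is a short calculation using Lemma \ref{lem:potential} and formula \eqref{eq:st3}.
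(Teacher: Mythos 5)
Your proposal starts as the paper does (pass to the separated coordinates (\ref{eq:LC}, \ref{eq:Ldiag}) and write the joint level set via \eqref{eq:st3}), but the central step is wrong. Sign conventions aside, the separation relations give $\tfrac12 f_1(q_i)p_i^2 = U_1(q_i)-\sum_k H^1_k q_i^{N-1-k}$, i.e. $\tfrac12 p_i^2$ is the \emph{quotient} $\bigl(U_1(q_i)-\sum_k H^1_k q_i^{N-1-k}\bigr)/f_1(q_i)$, and similarly with $f_2,U_2,H^2_k$. Hypothesis \eqref{eq:condition} equates the \emph{products} $\bigl(-U_\alpha(t)+\sum_k H^\alpha_k t^{N-1-k}\bigr)f_\alpha(t)$; it does not equate the quotients unless $f_1^2=f_2^2$. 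In the very application the lemma is built for, $f_1\equiv 1$ (KdV reduction) while $f_2(t)=-4\prod_j(t-a_j)$ (Neumann), so the momentum graphs $p_i(q_i)$ of the two systems are genuinely different, and your assertion that \eqref{eq:condition} makes the joint level sets the same Lagrangian submanifold of $T^*\mathbb{R}^N$ is false. The second half of your argument (common Lagrangian submanifold, flows tangent to it, same orbits up to reparametrisation) therefore does not get off the ground.

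The missing idea — and essentially the whole content of the paper's short proof — is the Legendre transformation: the two systems share the canonical coordinates $(q,p)$ but have \emph{different} metrics, so the common configuration curve relates momenta to velocities by $p_i=\frac{\prod_{s\ne i}(q_i-q_s)}{f_\alpha(q_i)}\,\dot q_i$. Substituting this into the separation relation, the metric factor $1/f_\alpha$ converts the division by $f_\alpha$ into multiplication: up to the factor $\prod_{s\ne i}(q_i-q_s)^{\pm2}$, which involves only the shared coordinates, the ODE on the curve $x\mapsto(q_1(x),\dots,q_N(x))$ reads
\begin{equation*}
\dot q_i^{\,2}\ \sim\ f_\alpha(q_i)\Bigl(-U_\alpha(q_i)+H^\alpha_0 q_i^{N-1}+\cdots+H^\alpha_{N-1}\Bigr),
\end{equation*}
which is the paper's equation \eqref{eq:legendre}. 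Thus the configuration-space ODEs depend on the data only through exactly the product equated in \eqref{eq:condition}, the two ODE systems on $\mathbb{R}^N(q)$ literally coincide, and the conclusion follows with the same parametrisation — no discussion of unparametrised orbits or of the flows of the other commuting integrals is needed. Your concerns about turning points and the loci $q_i=q_j$ are handled simply by working where the separated chart is valid and extending by real-analyticity; the quotient-versus-product confusion is the substantive gap to repair.
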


\begin{proof} Without loss of generality we may assume that $(g,L)$ are given by (\ref{eq:LC}, \ref{eq:Ldiag}). 
First we view  solutions of  both system as  curves $(q(\tau),p(\tau) )$ on $\mathbb{R}^{2N} = T^*\mathbb{R}^N$. In view of \eqref{eq:st3}, 
we have for any $i=1,\dots, N$  
\begin{equation}\label{eq:legendre-1}
\tfrac{1}{2}f (q_i)p_i^2 = -U (q_i)+ H_0  q_i^{N-1}+ \cdots + H_{N-2}  q_i  +  H_{N-1} .  
\end{equation}
Applying   Legendre transformation,  we see that \eqref{eq:legendre-1} is equivalent to the system of ODEs 
\begin{equation}\label{eq:legendre}
\tfrac{1}{2f (q_i)}\left(\tfrac{\dot q_i}{\prod_{s=1; s\ne i}^N(q_i-q_s)}\right)^2 = -U (q_i)+ H_0 q_i^{N-1}+ \cdots + H_{N-2}  q_i  +  H_{N-1} .  
\end{equation}
We see that   the  condition \eqref{eq:condition} implies that the solutions of both 
systems viewed now as curves on $\mathbb{R}^N(q)$ satisfy the same  
 system of  ODEs  and therefore every solution of one system is a solution of the other. 
\end{proof}

Let us now compare, with the help of Lemma \ref{lem:2} and Remark \ref{rem:1}, the systems in question: the Neumann system, the geodesic flow of the metric geodesically equivalent to  the metric of ellipsoid, and the Benenti system coming from the finite-dimensional reduction of the KdV system via Theorem \ref{thm:1}.  

For all of them, the product $f(t) (-U(t) + H_0 t^{N-1}+\cdots+H_{N-1})$ is a polynomial of degree $2N+1$ with positive leading coefficient.
Clearly, the value of the leading coefficient is not important, once it is positive, as one can scale it by a positive constant by rescaling the time.  The polynomial $f(t) (-U(t) + H_0 t^{N-1}+\cdots+H_{N-1})$  has certain  restrictions which we will discuss now.

The   restriction   in the KdV systems is that    the second  highest coefficient of $f(t) (-U(t) + H_0 t^{N-1}+\cdots+H_{N-1})$   is zero. This restriction does not affect anything as the transformation 
$L_{new}= L_{new} + \const\Id $ can make   the second highest coefficient arbitrary.

Next, in the Benenti systems  describing the Neumann system and the geodesic flow of the metric geodesically equivalent to the ellipsoid, the product $f(t) (-U(t) + H_0 t^{N-1}+\cdots+H_{N-1}) $    has at least 
$N+1$ real different  zeros $a_1,\dots, a_{N+1}$.  This restriction is essential.

Note also that not all values of the integrals allow solutions. Indeed, the system of 
ODEs \eqref{eq:legendre} has no real solution if $f_\alpha(t) (-U(t)+ H_0qt^{N-1}+\cdots + H_{N-1})  <0$.

As it is clear from the introduction, for our paper the relation of the stationary solutions of KdV and the solutions of the Neumann systems is most important. Lemma \ref{lem:2} implies 
that stationary solutions of KdV, viewed as curves $x\mapsto u(x)$, correspond to the  evolution of $2\textrm{trace}(L)$ along the solution of the Neumann system. 
By Remark \ref{rem:2}, up to addition of a constant, $-\textrm{trace}(L)$ is the potential energy of the Neumann system. This relates  the potential energy of the Neumann system evaluated along a solution of the Neumann system to stationary solutions of KdV.  The relation is of course known \cite{moser80,moser81,veselov80}. 

Concerning the relation of the geodesics  on  ellipsoid to stationary solutions of the KdV system, Lemma \ref{lem:2} and the discussion above established such a relation between the metric geodesically equivalent to the metric of ellipsoid and the KdV system. The solutions of the Hamiltonian systems corresponding to the geodesic flow of the ellipsoid and to the geodesic flow of the metric geodesically equivalent to the metric of the ellipsoid give  the same curves on the manifold, but differently parameterised. And indeed, in \cite[\S 3.5]{moser80} the reparameterisation is given.

\begin{remark} \label{rem:3} 
Lemma \ref{lem:2} and discussion above links the solution  of Neumann system, geodesics of the metric geodesically equivalent to the metric of  the ellipsoid and the finite-dimensional reductions of the KdV system. 
  The arguments used in the discussion are essentially local calculations in a special coordinate chart, which, e.g.  in the case of Neumann system,   covers    almost whole but not the whole sphere. A natural question is therefore whether the links  between systems  survives  when the solution of the Neumann system leaves the coordinate chart, or possibly never passes through the coordinate chart. 

  The answer is ``yes''. Indeed,  all considered  systems are real-analytic, so if two solutions coincide locally they coincide globally. Moreover, the 
  limit of a sequence of solutions is a solution, so even if the solution never passes through the special coordinate chart, the link remains. 
\end{remark}

\section{Stationary solutions  corresponding to the Neumann system and  finite-bandness of the  corresponding Sturm-Liouville-Hill problem}

\subsection{ The property of the polynomial C(t) for solutions of KdV coming from the solution of Neumann system.  }

By Lemma \ref{lem:2}, see also discussion  short after,  the solutions of the Neumann system, viewed as curves on the $N$-dimensional  sphere, are closely related to finite-dimensional reductions of the KdV system corresponding to the polynomial 
\begin{equation}\label{eq:pol1} 
C(t)=\tfrac{1}{2} (t^N+ H_0 t^{N-1}+ \cdots + H_{N-2} t  +  H_{N-1})\prod_{s=1}^{N+1}(t-a_i).    \end{equation}
From the formula   \eqref{eq:pol1}  we immediately see that the polynomial $C(t)$ has $N+1$ real roots $a_1,\dots, a_{N+1}$.  The condition that the sphero-ellipsoidal coordinates $q_i$ used for the description of the Neumann system are necessarily real-valued and satisfy, see Remark \ref{rem:2a},  
\begin{equation}\label{eq:ax}
a_1 < q_1 <a_2 < q_2 <\cdots <q_{N}< a_{N+1},
\end{equation}
give further assumptions on the roots of $C(t)$, which we discuss now.

\begin{lemma} \label{lem:3}
   Consider a solution of the Neumann system and the corresponding polynomial  $C(t)$ given by \eqref{eq:pol1}, where $H_0, H_1, \dots, H_{N-1}$ are the values of the integrals $\tilde I_0,\dots, \tilde I_{N-1}$ corresponding to this solution. 
   
   Then, all roots of polynomial $C(t)$ are  real numbers.  
   Moreover,  if we denote roots, counted with their multiplicities, by $r_1\le r_2 \le \cdots \le r_{2N+1},  $ and by $q_1\le \cdots \le q_N$ the eigenvalues of the (1,1)-tensor $L$, again counted with their multiplicities, then 
   \begin{equation}\label{eq:ordered}
       r_1 \le  r_2  \le  q_1 \le     r_3 \le r_4\le  q_2 \le r_5 \le r_6 \le q_3   \le  \cdots \le r_{2N-2}  \le r_{2N-1} \le  q_{2N}   \le r_{2N+1},  
   \end{equation}
   that is,   $q_i$ lies on the interval $[r_{2i}, r_{2i+1}]$). 
\end{lemma}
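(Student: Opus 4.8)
The plan is to extract from the separation relations the single inequality $C(q_i)\le 0$ for $i=1,\dots,N$, and then run a Sturm‑type sign count on $C$. I would work in the sphero‑ellipsoidal coordinates $q_1,\dots,q_N$ of the sphere, in which $(g_s,L)$ takes the form (\ref{eq:LC},\ref{eq:Ldiag}) with $f$ from \eqref{eq:fsphere} and $U$ from \eqref{eq:Usphere}; there the eigenvalues of $L$ are exactly $q_1,\dots,q_N$, and by Remark \ref{rem:2a} they satisfy $a_1<q_1<a_2<\dots<q_N<a_{N+1}$. Along the given Neumann solution the integrals $\tilde I_j$ are constant, equal to $H_j$, so by \eqref{eq:legendre-1} at each point of the trajectory
\[
\tfrac12 f(q_i)p_i^2=-U(q_i)+H_0q_i^{N-1}+\dots+H_{N-1}=:Q(q_i),\qquad i=1,\dots,N,
\]
where $Q(t):=-U(t)+H_0t^{N-1}+\dots+H_{N-1}$ is monic of degree $N$; the existence of such a $Q$ with $C(t)=\tfrac12 Q(t)\prod_{s=1}^{N+1}(t-a_s)$ is exactly what Lemma \ref{lem:2} gives for the Neumann system (see \eqref{eq:pol1}). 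Since \eqref{eq:fsphere} gives $\prod_{s=1}^{N+1}(t-a_s)=-\tfrac14 f(t)$, this yields $C(q_i)=\tfrac12 Q(q_i)\cdot(-\tfrac14 f(q_i))=-\tfrac1{16}f(q_i)^2p_i^2\le 0$ for every $i$, with equality iff $p_i=0$. This is the only analytic input; everything else is sign‑counting about the polynomial $C$, whose $2N+1$ roots are $a_1,\dots,a_{N+1}$ together with the $N$ roots of $Q$.

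Next I would establish reality of all roots of $C$, for which it suffices to show $Q$ has $N$ real roots. In the open interval $(a_i,a_{i+1})$ containing $q_i$ the product $\prod_s(t-a_s)$ has sign $(-1)^{N+1-i}$, so $C(q_i)\le 0$ forces $\operatorname{sign}Q(q_i)=(-1)^{N-i}$ whenever $p_i\ne 0$; equivalently $(-1)^NQ(q_i)$ has sign $(-1)^i$. Hence at a moment where all $p_i\ne 0$ and $a_1<q_1<\dots<q_N<a_{N+1}$ holds strictly, the numbers $(-1)^NQ(q_1),\dots,(-1)^NQ(q_N)$ alternate in sign; together with $(-1)^NQ(t)\to+\infty$ as $t\to-\infty$ (as $Q$ is monic of degree $N$) and $\deg Q=N$, counting sign changes produces a root of $Q$ in each of the $N$ pairwise disjoint intervals $(-\infty,q_1),(q_1,q_2),\dots,(q_{N-1},q_N)$ and none in $(q_N,\infty)$. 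Since $\deg Q=N$, these are all the roots of $Q$, real and simple, say $\rho_1<q_1<\rho_2<q_2<\dots<\rho_N<q_N$; in particular all $2N+1$ roots of $C$, namely $a_1,\dots,a_{N+1},\rho_1,\dots,\rho_N$, are real.

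For the interlacing \eqref{eq:ordered}: from $a_i<q_i<a_{i+1}$, $\rho_j<q_j$ for all $j$, and $\rho_{i+1}>q_i$ (for $i<N$), the roots of $C$ that are strictly below $q_i$ are exactly $a_1,\dots,a_i,\rho_1,\dots,\rho_i$, that is, $2i$ of them, while each of the remaining $2N+1-2i$ roots exceeds $q_i$. Ordering all roots with multiplicity as $r_1\le\dots\le r_{2N+1}$, this says precisely $r_{2i}\le q_i\le r_{2i+1}$ for $i=1,\dots,N$, which is \eqref{eq:ordered}.

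The main obstacle is that the sign count used a moment at which all $p_i\ne 0$ and \eqref{eq:ax} holds with strict inequalities; at special moments --- the trajectory touching the boundary of the sphero‑ellipsoidal chart, so some $q_i$ equals some $a_j$, or some $p_i$ vanishing --- the alternation argument degenerates and roots may collide. I would dispose of this in the spirit of Remark \ref{rem:3}: the polynomial $C$ depends only on the constants $H_j$ and is the same at every moment, the eigenvalues of $L$ depend continuously (indeed real‑analytically) on the point, and both ``all roots of $C$ are real'' and the ordering \eqref{eq:ordered} (with non‑strict inequalities) are closed conditions; so it is enough to verify them at the generic moments, and they then hold at every moment by continuity, the coincidences being absorbed into the multiplicities.
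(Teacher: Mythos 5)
Your proposal is correct and follows essentially the same route as the paper: the separation relations force $C(q_i)\le 0$, sign alternation of the monic degree-$N$ factor between the $q_i$'s then yields the real roots, and the degenerate moments are handled by a perturbation/continuity argument at a generic point. You are in fact somewhat more explicit than the paper's proof, which only extracts $N-1$ roots from the alternation (finishing with the parity of complex roots) and leaves the interlacing \eqref{eq:ordered} largely implicit, whereas you locate all $N$ roots of the degree-$N$ factor and derive \eqref{eq:ordered} by counting roots below each $q_i$.
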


We recall that in the sphero-ellipsoidal coordinate system $q_1,\dots,q_N$
the operator $L$ has the form $\textrm{diag}(q_1,\dots,q_N)$, so  the notation $q_i$ used for eigenvalues in Lemma \ref{lem:3} is compatible with that in \S  \ref{sec:neumann}. Note though that for the Neumann system the operator $L$ is defined also at the points where the sphero-ellisoidal coordinate system is not defined. These points are precisely the points where $L$ has multiple eigenvalues.

\begin{proof}   As the property of a polynomial to have a complex root is an open property, we may work at 
 a point $(q_1,\dots ,q_N)$ of our sphero-ellipsoidal coordinate system and assume that all component of 
 the  velocity vector $ (\dot q_1,\dots, \dot q_N)$ are different from $0$.  From \eqref{eq:pol1},  
we see that the  leading coefficient of the  polynomial  is positive  so for $t<< -1 $  we have $C(t)<0$.  Next, from  \eqref{eq:legendre} we see that the value of  $C(t)$ are negative  at each    $t= q_i$. Since 
 the polynomial $\prod_{s=1}^{N+1}(t-a_i) $ has different signs at $t=x_i$ and at $t=q_{i+1}$,   the polynomial $(t^N+ H_0 t^{N-1}+ \cdots + H_{N-2} t  +  H_{N-1}) $ should also have 
 different signs at $q_i$ and at $q_{i+1}$ implying the existence, in view of \eqref{eq:ax},  of $N-1$ real roots on the interval $[a_1,a_{N+1}]. $  Thus, $2N$  roots, counted with multiplicities,  of our polynomial
 $C$ of degree $2N+1$,   are real, so the remaining root is real as well.  
 \end{proof}

By Theorems \ref{thm:1},  \ref{thm:2}, solutions  of the KdV system  coming from finite-dimensional reduction  correspond to the solutions of the Benenti system with $f(t)=1$ and   $U(t)= -(t^{2N+1}+ c_{2}t^{2N-1}+\cdots + c_{2N+1})$.   By Lemmas    \ref{lem:2}, every solution of the Neumann system corresponds to a solution of the Benenti system above.   Lemma  \ref{lem:3} tells which solutions of the  Benenti system above correspond to the solutions of some Neumann system: necessary  conditions are that all roots of the polynomial $C(t)$ are real  and that the eigenvalues of the corresponding $L$ are real and satisfy \eqref{eq:ordered}. The next Lemma shows, in particular,  that these are also sufficient conditions.  
It also  implies that if all eigenvalues of $L$ are real and satisfy \eqref{eq:ordered} at one point of the solution of the Benenti system  above, then it is true at every point. 

\begin{lemma}\label{lem:behaviour}
    Suppose $(w_1(x),...,w_n(x))$ is a solution of \eqref{eq:base1} such that the polynomial $C(t) $ has only real roots which we list with their multiplicities and denote by $r_1\le \cdots\le  r_{2N+1}$.
Denote by $q_1(x),\dots,q_N(x)$ the zeros of the polynomial   \eqref{eq:11a}, listed with their multiplicities. Assume that  $(w_1(0),\cdots, w_N(0))$  are such that  all $q_i(0)$ are real valued and satisfy  $r_{2i} \le q_i(0)\le r_{2i+1}$. Then, the  solution $(w_1(x),...,w_n(x))$, with the initial values  $(w_1(0),\cdots, w_N(0))$, can be extended   for all $x\in \mathbb{R}.   $   

Moreover,    suppose  for a certain $i$  and for certain $\tilde x$ 
 we have $r_{2i} < q_i(\tilde x)< r_{2i+1} $ and denote by  $I_i$   the connected component of 
the set $ \{x\in \mathbb{R} \mid r_{2i}< q_i(x)<r_{2i+1}\}$ containing $0$.

Then  the    function  $q_i(x)$ smoothly depends on $x$ for $x\in I_i$, moreover    $\tfrac{d}{dx}q_i\ne 0$ for all $x\in I_i$ and satisfies \eqref{eq:legendre}.
\end{lemma}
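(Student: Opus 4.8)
The plan is to pass to the separating coordinates $q_1,\dots,q_N$ (the ones in which $(g,L)$ has the form \eqref{eq:LC}, \eqref{eq:Ldiag} and $L=\textrm{diag}(q_1,\dots,q_N)$), where, by Theorem \ref{thm:2} and \eqref{eq:st3}, the Benenti trajectory underlying the solution of \eqref{eq:base1} decouples: for each $i$ the function $q_i(x)$ obeys the quadrature \eqref{eq:legendre} with $f\equiv 1$. Its right‑hand side $-U(q_i)+H_0q_i^{N-1}+\cdots+H_{N-1}$ is, as a polynomial in $q_i$, a fixed nonzero constant multiple $\kappa\,C(q_i)$ of the polynomial $C$ of \eqref{eq:base1} — this is exactly the identity behind Lemma \ref{lem:2} and, in the Neumann normalisation, behind \eqref{eq:pol1}; that the lower coefficients match is the statement that the extra parameters of \eqref{eq:base1} are the integral values, cf. the discussion after Theorem \ref{thm:2} and formula \eqref{form:U(t)}. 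Hence, squaring,
\begin{equation}\label{eq:qquadrature}
\dot q_i^{\,2}=\kappa\,C(q_i)\prod_{s=1,\,s\ne i}^{N}(q_i-q_s)^2 .
\end{equation}
Since the left‑hand side is $\ge 0$, one has $\kappa\,C(q_i)\ge 0$ along any real solution; as $C<0$ on the interior of each band $(r_{2i},r_{2i+1})$ (no root of $C$ lies strictly between two consecutive $r$'s), the constant $\kappa$ is forced to be negative, and then $\{t:\kappa\,C(t)\ge 0\}=\{t:C(t)\le 0\}$ is the disjoint union of $(-\infty,r_1]$ and the closed bands $[r_2,r_3],\dots,[r_{2N},r_{2N+1}]$; the band meeting $q_i(0)$ is precisely $[r_{2i},r_{2i+1}]$ (note the bands are pairwise non‑overlapping, so the labelling of the roots of \eqref{eq:11a} is unambiguous).

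First I would prove the confinement $q_i(x)\in[r_{2i},r_{2i+1}]$ for every $x$ in the maximal interval of existence of the solution. Indeed $q_i$ depends continuously on $x$ and, by \eqref{eq:qquadrature}, stays in the closed set $\{C\le 0\}$, hence cannot leave the connected component $[r_{2i},r_{2i+1}]$ of that set containing $q_i(0)$. A by‑product: two of the $q_i$ can meet only at a common band‑endpoint $r_{2i+1}=r_{2i+2}$, which is then a root of $C$ of multiplicity $\ge 2$, and near it \eqref{eq:qquadrature} gives $|\dot q_i|\le \mathrm{const}\cdot|q_i-r_{2i+1}|$, so such a value is reached only as $x\to\pm\infty$; thus, apart from trivial equilibria, the separating coordinates do not collide at any finite $x$.

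The global‑existence statement then follows from an a priori bound. By the confinement, all $q_i(x)$ lie in the fixed compact interval $[r_1,r_{2N+1}]$, so the $w_k(x)$ — which by \eqref{eq:11a} are, up to sign, the elementary symmetric polynomials of $q_1,\dots,q_N$ — are uniformly bounded; and by \eqref{eq:qquadrature} so are the velocities $\dot q_i$, its right‑hand side being bounded on $[r_1,r_{2N+1}]^N$. Substituting into $w_x(x;\mu)=-\sum_i\dot q_i\prod_{s\ne i}(\mu-q_s)$ and reading off coefficients yields a uniform bound on $w_{1,x}(x),\dots,w_{N,x}(x)$ (at the isolated instants where an individual $\dot q_i$ fails to exist because of a coordinate collision, this bound on the analytic functions $w_{j,x}$ persists by continuity). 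By Theorem \ref{thm:2}, \eqref{eq:base1} is equivalent to the second‑order Euler--Lagrange system of a Lagrangian system with flat metric $g_1$ and polynomial potential; solved for $w_{xx}$ it has a smooth right‑hand side in $(w,w_x)$ on the region swept out, so a solution confined to a fixed compact subset of $(w,w_x)$‑space cannot escape to infinity in finite $x$ and therefore extends to all of $\mathbb{R}$.

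Finally, on an interval $I_i$ where $r_{2i}<q_i(x)<r_{2i+1}$, the number $q_i(x)$ is the unique root of \eqref{eq:11a} lying in the isolating interval $(r_{2i},r_{2i+1})$, since every other $q_j(x)$ is confined to a band lying entirely to the left ($j<i$) or to the right ($j>i$) of $[r_{2i},r_{2i+1}]$; being a simple root of a polynomial with real‑analytic coefficients, $q_i$ is a real‑analytic function of $x$ on $I_i$ by the implicit function theorem. There $C(q_i)\ne 0$ (as $q_i$ is not a root of $C$) and $\prod_{s\ne i}(q_i-q_s)\ne 0$, so \eqref{eq:qquadrature} gives $\tfrac{d}{dx}q_i\ne 0$; and $q_i$ satisfies \eqref{eq:legendre}, which is just \eqref{eq:legendre-1} after the Legendre transform read on the separating chart. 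I expect the main obstacle to be the a priori $C^1$ bound of the third paragraph: one must make sure that the (at worst isolated and ``soft'') degenerations of the separating coordinate system — collisions of the $q_i$, or a $q_i$ reaching a band endpoint — do not let the underlying Lagrangian trajectory, governed by a split‑signature metric, blow up in finite $x$.
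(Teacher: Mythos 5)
Your route is genuinely different from the paper's: the paper realizes the given solution of \eqref{eq:base1} as (the image of) a trajectory of an auxiliary Neumann system on the compact sphere $S^N$ (with $a$'s and integral values manufactured from the roots $r_j$), so completeness is automatic, and it treats the boundary configurations by a passage to the limit; you instead argue intrinsically, confining the roots $q_i$ to the intervals $[r_{2i},r_{2i+1}]$ via the quadrature, bounding $(w,w_x)$, and invoking the standard no-escape criterion. For data with all $q_i(0)$ strictly inside and all roots of $C$ simple along the way, your argument is correct and arguably more self-contained than the paper's (your last paragraph, giving smoothness, $\tfrac{d}{dx}q_i\neq 0$ and \eqref{eq:legendre} on $I_i$, matches the paper's). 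Two cosmetic points: what you call ``bands'' $(r_{2i},r_{2i+1})$ are the paper's gaps, and the sign of $\kappa$ need not be inferred from the existence of a real solution --- it can be read off from \eqref{form:U(t)} together with the identification of the integral values, or, cleaner, by evaluating \eqref{eq:base1} at a real root of $w(x;\cdot)$, which gives $C(q_i)=-\tfrac{m_0}{2}\,w_x(x;q_i)^2\le 0$ without any simplicity or differentiability assumption.

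The genuine gap is exactly the degenerate case that the hypothesis $r_{2i}\le q_i(0)\le r_{2i+1}$ allows and that the paper treats separately: a $q_i(0)$ sitting at a multiple root of $C$, in particular $q_i(0)=q_{i+1}(0)=r_{2i+1}=r_{2i+2}$. Your confinement step uses ``$q_i$ stays in $\{C\le 0\}$'', which is only meaningful while $q_i$ is real, and your Gronwall bound $|\dot q_i|\le \mathrm{const}\,|q_i-r|$ only excludes \emph{reaching} a collision in finite time from a non-collided real configuration; it does not exclude that roots which are \emph{already} collided at $x=0$ leave the real axis instantly as a conjugate pair. Once a pair is complex, neither the confinement nor the $C^1$ bound applies (plugging a complex root into \eqref{eq:base1} gives no sign information, and the underlying flat metric $g_1$ is of split signature, so the conserved integrals do not bound the motion), and the a priori compactness of $(w,w_x)$ --- hence global existence --- is no longer justified. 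You flag this worry but dismiss it with ``apart from trivial equilibria'', which is precisely what needs proof. The paper closes this case by perturbing $C$ and the initial data to the nondegenerate situation, solving via the Neumann system on the compact sphere, and taking a limit of solutions; within your framework one could instead evaluate \eqref{eq:base1} at the multiple root $\mu=r$, so that $F(x)=w(x;r)$ satisfies $m_0\bigl(F''F-\tfrac12 (F')^2\bigr)+(r-\tfrac12 w_1)F^2=0$, i.e.\ the $C=0$ case of Lemma \ref{lem:p2}, forcing $F=\pm\psi^2$ for a solution $\psi$ of the linear equation \eqref{eq:lin} and thereby controlling the collided pair --- but some such additional argument must be supplied.
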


\begin{proof} First assume that for each $i$  we have  $r_{2i}<q_i(0)<r_{2i+1}$.  Then, we take the Neumann system such that $a_i=r_{2i}$ for $i<N+1$ and  $a_N= r_{2N+1}$. Next, take a point on $S^N$ such that sphero-ellipsoidal coordinates of this points are $(q_1(0),\dots, q_{N}(0)).$  The corresponding function $f(t)$ in the equation \eqref{eq:legendre} is then equal to $-4 \prod_{s=1}^{N+1} (q_i-a_s)$, see \eqref{eq:fsphere}. Next, take $H_0,\dots, H_{N-1}$ such that the right hand side of the equation \eqref{eq:legendre} has roots $r_1,r_3,\dots, r_{2i-1},\dots,  r_{2N-1}$. This is clearly possible since for the Neumann system $U(t)$ is given by \eqref{eq:Usphere}, so 
    the right hand side of \eqref{eq:legendre} is polynomial of degree $N$ in which we freely choose all nonleading coefficients.     Then, at out point $(q_1(0),\dots, q_{N}(0))$ of the sphere the products 
    $f(q_i(0))U(q_i(0))$ are positive and we can choose $\left(\tfrac{\ddd}{\ddd x} q_1(0),\dots, \tfrac{d}{dt} q_N(0)\right)$ 
    such that the equation \eqref{eq:legendre} is satisfied. The solution of the Neumann system with this initial data will correspond to the solution of \eqref{eq:base1}
    with the initial values  $(w_1(0),\cdots, w_N(0))$. Since the solution of Neumann system is defined on a closed manifold and therefore can be extended for the whole $\mathbb{R}$, the solution of   \eqref{eq:base1} can be extended to the whole $\mathbb{R}$ as well.  At the points of the interval $I_i$, the  function  $C(q_i(x))$ is not zero, so \eqref{eq:base1} implies that $\tfrac{d}{dt} q_i(x)\ne 0$.   Lemma \ref{lem:behaviour} is proved  under the assumption that all roots of the polynomial  $C$ are different. 

    The remaining  case when for a certain $i$ we have  $r_{2i}=q_i(0)$ or $r_{2i+1}=q_i(0)$  can  be proved by passage to the limit. We take  the sequence of the polynomials $C_n(t)$ satisfying the standard assumptions  such  that it converge to our polynomial $C(t)$, and the sequence of initial data satisfying assumptions used in the proof above  such that it converges to our initial data.   The limit of the corresponding solutions   of the Neumann system is a solution of the Neumann system. As it lives on a compact manifold, it exists for all  
    $x\in \mathbb{R}$.

    Finally, for $x\in I_i$,   $q_i(x) $ is an isolated root so it depends smoothly on $x$.  Since the equation \eqref{eq:legendre} are fulfilled for the sequence of the solutions considered above, it is fulfilled for the limit solution. But then \eqref{eq:legendre} implies $\tfrac{d}{dx}q_i\ne 0$. 
\end{proof}

\begin{corollary} \label{cor:1}
  Any solution   of the KdV system corresponding to a solution of the Neumann system is bounded.    
\end{corollary}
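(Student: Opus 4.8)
The plan is to write the value of the KdV solution as a bounded multiple of $\trace(L)$ along the corresponding trajectory of the Neumann system, and then to exploit compactness of the sphere $S^N$.

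First I would recall, using Theorems \ref{thm:1} and \ref{thm:2} together with Lemma \ref{lem:2}, that for a fixed $t$ the curve $x\mapsto u(x,t)$ equals $x\mapsto 2w_1(x)$, where $(w_1(x),\dots,w_N(x))$ is a solution of \eqref{eq:base1} which, under the identification furnished by Lemma \ref{lem:2}, is read off from a trajectory of the Neumann system. In the companion coordinates \eqref{eq:comp} one has $w_1=-\trace(L)$ by Remark \ref{rem:2}, while in the sphero-ellipsoidal coordinates $L=\textrm{diag}(q_1,\dots,q_N)$ and hence $\trace(L)=q_1+\dots+q_N$, the functions $q_i(x)$ being the eigenvalues of $L$ along the Neumann trajectory. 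Since $u=2w_1$, this gives $|u(x,t)|=2\,\bigl|q_1(x)+\dots+q_N(x)\bigr|$, so everything reduces to a uniform bound on the $q_i$.

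Next I would note that the Neumann trajectory is defined for all $x\in\mathbb{R}$ — this is part of Lemma \ref{lem:behaviour}, whose hypotheses hold here by Lemma \ref{lem:3}, and it also follows from completeness of the Neumann flow on the compact manifold $S^N$ — and that the $q_i$ are uniformly bounded: they are the eigenvalues of the globally defined continuous $(1,1)$-tensor $L$ on the compact $S^N$, and more explicitly, by Remark \ref{rem:2a} together with continuity at the points where the sphero-ellipsoidal chart degenerates, which are exactly the points where $L$ has multiple eigenvalues, each $q_i(x)$ lies in $[a_1,a_{N+1}]$. Thus $Na_1\le q_1(x)+\dots+q_N(x)\le Na_{N+1}$, whence $|u(x,t)|\le 2N\max\{|a_1|,|a_{N+1}|\}$ for every $x$.

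Finally, this bound does not depend on $t$: the numbers $a_1,\dots,a_{N+1}$ are among the roots of the polynomial $C$ of \eqref{eq:pol1}, which is independent of $t$ (cf. Remark \ref{rem:4}), so $u$ is bounded on its whole domain. The only delicate point is the global-in-$x$ legitimacy of identifying $u(\cdot,t)$ with a trajectory of the Neumann system — that is, that the solution of \eqref{eq:base1} does not escape to infinity before it can be matched with a globally defined solution of the Neumann system — and I expect this to be essentially the only obstacle; but it is exactly what Lemma \ref{lem:behaviour} establishes, reinforced by the real-analyticity and limit argument of Remark \ref{rem:3}.
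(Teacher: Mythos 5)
Your proposal is correct and follows essentially the paper's own argument: you identify $u$ (up to a constant factor) with $\trace(L)=q_1+\dots+q_N$ along the Neumann trajectory and bound each $q_i$ via the range $[a_1,a_{N+1}]$ of the sphero-ellipsoidal coordinates on the compact sphere, which is exactly the paper's proof (with your compactness remarks matching its alternative proof). The additional care you take about global definition in $x$ and independence of $t$ is fine but not needed beyond what Lemma \ref{lem:behaviour} and Remark \ref{rem:4} already provide.
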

\begin{proof}
 Indeed, the solution is (up to a constant) the trace of $L$, which is the sum of $x_i$, which is clearly bounded from above by $N a_{N}$ and from  below by $N a_{1}.$\end{proof} 
 
\begin{proof}[An alternative proof:] up to a constant,  the solution is essentially the potential energy of the Neumann system evaluated along a solution of the Neumann system. The potential energy is clearly bounded as it 
 is a continuous function on  the sphere  $S^N$  which is compact.  \end{proof}

\subsection{Proof that the Sturm-Liouville-Hill problem corresponding to the Neumann system has finite-band spectrum.  }
In the previous section we discussed the conditions on the polynomial $C(t)$  for the stationary solutions of the KdV equations coming from the Neumann system, in particular, they should have only real roots. 
The goal of this section  is to prove the  following Theorem: 

\begin{theorem}\label{lem:p3}
Denote by $r_1\le r_2\le \cdots \le r_{2N+1} $ the roots of the polynomial $C$, counted with their multiplicities. 

Then, $\tfrac{1}{2} \lambda$ lies in  the spectrum of $ -\tfrac{\partial }{\partial x^2} + \tfrac{1}{2} u$ if and only if   
\begin{equation} \label{eq:bands} \lambda \in [r_1, r_2] \cup [r_3, r_4]\cup \cdots \cup [r_{2N-1}, r_{2N}] \cup [r_{2N+1}, +\infty). \end{equation} 
\end{theorem}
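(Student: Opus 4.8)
The plan is to translate the spectral problem for the Hill operator $-\partial_x^2 + \tfrac12 u$ with $u = 2w_1(x)$ into the language of the Benenti system and the associated algebraic curve, and then to combine Lemma~\ref{lem:behaviour} (which controls the oscillatory behaviour of the $q_i(x)$) with the classical Floquet/oscillation criterion for finite-band potentials. Concretely, I would first exhibit an explicit solution of the Schr\"odinger equation $(-\partial_x^2 + \tfrac12 u)\psi = \tfrac12\lambda\,\psi$ built out of the polynomial data of the reduction. Since $w(x;\mu) = \mu^N + w_1(x)\mu^{N-1} + \cdots + w_N(x)$ and \eqref{eq:base1} reads $C(\mu) = m_0\bigl(w_{xx}w - \tfrac12 w_x^2\bigr) + (\mu - \tfrac12 w_1)w^2$, the natural guess is the Baker--Akhiezer-type product $\psi_{\pm}(x;\lambda) = \sqrt{w(x;\lambda)}\,\exp\!\Bigl(\pm\int^x \tfrac{\sqrt{-C(\lambda)}}{w(s;\lambda)}\,ds\Bigr)$ (suitably normalized, with $m_0$ absorbed). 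One checks by direct differentiation, using \eqref{eq:base1} and $u = 2w_1$, that this $\psi$ solves the Schr\"odinger equation; this is the content I would present as (or cite from) Lemma~\ref{lem:p2} in the paper. The Wronskian of $\psi_+$ and $\psi_-$ is, up to a constant, $\sqrt{-C(\lambda)}$, so the two solutions are independent exactly when $C(\lambda)\ne 0$, i.e. away from the roots $r_j$.

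Next I would set up the spectral dichotomy. For $\lambda$ with $C(\lambda) < 0$ the exponent is purely imaginary, so $\psi_\pm$ are bounded, oscillatory quasi-periodic solutions and $\tfrac12\lambda$ belongs to the spectrum; for $C(\lambda) > 0$ one of $\psi_\pm$ grows and the other decays exponentially (the exponent $\int^x w(s;\lambda)^{-1}ds$ has a definite sign provided $w(\cdot;\lambda)$ does not vanish, which is where I must be careful — see below), so $\tfrac12\lambda$ is not in the spectrum; the roots $r_j$ themselves are boundary/band-edge points and are handled by a limiting argument. The sign of $C(\lambda)$ on the real line is read off from its factorization: $C$ has real roots $r_1 \le \cdots \le r_{2N+1}$ and positive leading coefficient, so $C(\lambda) \le 0$ precisely on $[r_1,r_2]\cup[r_3,r_4]\cup\cdots\cup[r_{2N-1},r_{2N}]$, while $C(\lambda) \ge 0$ on $(-\infty,r_1]$, on the gaps $[r_2,r_3],\dots,[r_{2N},r_{2N+1}]$, and on $[r_{2N+1},+\infty)$. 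This matches \eqref{eq:bands} once we also argue that $(-\infty,r_1)$ and the finite gaps really are gaps — which again needs the nonvanishing of $w(x;\lambda)$ — and that $[r_{2N+1},+\infty)$ really is a band, which follows because there $C(\lambda)>0$ but the product $\psi_+\psi_-$ argument degenerates into the standard fact that for large $\lambda$ the potential perturbation is negligible and $\tfrac12\lambda$ lies in the continuous spectrum; more robustly, one shows $w(x;\lambda)$ must vanish for some $x$ when $\lambda > r_{2N+1}$, forcing the ``exponent'' to be oscillatory after all.

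The main obstacle is precisely this control of the zeros of $w(x;\lambda)$ in $x$ as a function of $\lambda$, and this is where Lemma~\ref{lem:behaviour} does the work. The zeros of $w(x;\lambda)$ in $\mu=\lambda$ correspond, at fixed $x$, to the eigenvalues $q_1(x) \le \cdots \le q_N(x)$ of $L$ evaluated along the solution; Lemma~\ref{lem:behaviour} tells us that, for a solution coming from the Neumann system, each $q_i(x)$ stays in its interval $[r_{2i},r_{2i+1}]$, is monotone on each component of $\{r_{2i} < q_i < r_{2i+1}\}$, and sweeps across that whole interval (reaching the endpoints, where $C$ vanishes, by the compactness/limiting argument in that Lemma's proof). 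Consequently: for $\lambda$ in a gap $(r_{2i},r_{2i+1})$ — i.e. strictly between two consecutive roots that bound the range of $q_i$ — there is some $x_0$ with $q_i(x_0)=\lambda$, hence $w(x_0;\lambda)=0$; I would then show this zero of $w$ forces $\psi$ to have a genuine exponential instability that is not compensated, so $\tfrac12\lambda$ is not in the spectrum; conversely for $\lambda$ in a band $(r_{2i-1},r_{2i})$ the polynomial $q_i$ never equals $\lambda$ (it lies strictly above or below), so $w(\cdot;\lambda)$ is nonvanishing, the exponent $\int^x w^{-1}$ is purely imaginary since $C(\lambda)<0$, and $\psi_\pm$ are bounded. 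The endpoints $r_j$ and the boundary rays are obtained by continuity/limits exactly as in the proof of Lemma~\ref{lem:behaviour}. I would organize the write-up as: (i) verify the explicit $\psi_\pm$ and compute the Wronskian; (ii) the trichotomy $C(\lambda)<0$, $=0$, $>0$ and its correspondence with boundedness, recording the sign pattern of the factorized $C$; (iii) invoke Lemma~\ref{lem:behaviour} to decide, within each sign region, whether $w(\cdot;\lambda)$ vanishes, thereby distinguishing true bands/gaps from the naive guess; (iv) close the band edges and the unbounded ray by a limiting argument.
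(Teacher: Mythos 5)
Your skeleton is the paper's: produce the explicit solutions $\psi_\pm=\sqrt{w}\,e^{\pm\sqrt{-2C(\lambda)}\,\phi}$ of Lemma~\ref{lem:p2}, decide boundedness from the sign of $C(\lambda)$, and use Lemma~\ref{lem:behaviour} to locate the zeros of $w(\cdot;\lambda)$. But both of your sign statements are reversed, and this is not cosmetic. Since $C$ is monic of odd degree $2N+1$ with roots $r_1\le\cdots\le r_{2N+1}$, one has $C<0$ on $(-\infty,r_1)$ and on the gaps $(r_{2i},r_{2i+1})$, and $C>0$ on the bands $(r_{2i-1},r_{2i})$ and on $(r_{2N+1},\infty)$ --- the opposite of your sign table, which also contradicts the identification of \eqref{eq:bands} with $\{\lambda\mid C(\lambda)\ge0\}$ in \eqref{eq:bands1}. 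Likewise, with exponent proportional to $\sqrt{-C(\lambda)}$, the case $C(\lambda)<0$ gives a \emph{real} exponent (exponential dichotomy), while $C(\lambda)>0$ gives the purely imaginary, oscillatory case; you state the converse. On the finite bands and finite gaps these two reversals happen to cancel, so your conclusions there are right for the wrong reasons; on $[r_{2N+1},+\infty)$ they do not cancel, and your proposed repair is false: for $\lambda>r_{2N+1}$ every eigenvalue satisfies $q_i(x)\le r_{2i+1}\le r_{2N+1}$ by \eqref{eq:ordered} and Lemma~\ref{lem:behaviour}, so $w(x;\lambda)=\prod_i(\lambda-q_i(x))$ never vanishes, and the appeal to ``the potential is negligible for large $\lambda$'' is not an argument (and says nothing for $\lambda$ just above $r_{2N+1}$). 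With the correct signs this case is immediate, exactly as in the paper: $C(\lambda)>0$, $w$ of constant sign, imaginary exponent, hence $\psi_1$ bounded and $\tfrac12\lambda$ in the spectrum.

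The second genuine gap is that the hardest step --- excluding the finite gaps $(r_{2i},r_{2i+1})$ --- is compressed into ``the zero of $w$ forces an exponential instability that is not compensated.'' Precisely because $q_i$ sweeps its gap, $w(\cdot;\lambda)$ has zeros there, so $\phi=\int \tfrac{ds}{2w}$ is singular, the formulas \eqref{eq:sub} define solutions only between consecutive zeros, and near a zero one of them behaves like $\mathrm{const}\cdot w$ and the other like a nonzero constant. One must (a) show these local expressions patch to global solutions, which uses that the residues of $\tfrac{1}{2w}$ at consecutive zeros cancel because the derivatives \eqref{eq:dw} have equal absolute value and opposite sign, and (b) prove that \emph{no} nontrivial combination $\const_1\psi_1+\const_2\psi_2$ is bounded; the paper does this via the splitting $w=q_1\bar w$, the identity $a(x)\bar w+bq_1=1$, the factorization \eqref{eq:rb}, and a separate argument that the first factor in \eqref{eq:rb} cannot approach zero. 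None of this appears in your plan, and without it the gap exclusion is unproved. A smaller point: with the paper's definition of spectrum (existence of a bounded solution), ``band edges by a limiting argument'' is not automatic, since that set is not obviously closed; the paper instead exhibits the explicit bounded solution $\psi=\sqrt{|w|}$ at $\lambda=r_j$, bounded by Corollary~\ref{cor:1}. So: same route as the paper, but you must fix the sign bookkeeping, treat the infinite band directly, supply the finite-gap unboundedness argument, and handle the edges explicitly.
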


We see that the spectrum consists  of finitely many closed intervals and one half-line $[2r_{2N+1}, +\infty)$.    In jargon, the intervals are called {\it bands} and the  half-line $[r_{2N+1}, +\infty)$ is called {\it the  infinite band}.  
Note that the number of (finite)  bands  is at most $N$.  Since $r_i$ may be equal to $r_{i+1}$, the number of bands can be smaller than $N$ and some of the bands could be actually points.  

Let us also observe that the set \eqref{eq:bands} coincides with 
\begin{equation} \label{eq:bands1} \{\lambda \in \mathbb{R} \mid  C(\lambda)\ge 0\}\, .  \end{equation}

 Lemma \ref{lem:p2}  and the discussion afterwards relates the KdV equation, via \eqref{eq:base1}, to the spectral problem for Schr\"odinger-Hill equation.

\begin{lemma}\label{lem:p2}
Let    $w(x ), \sigma(x )$  be   smooth functions defined on an interval such that $w>0$  
and $m\ne 0$, $C$ be constants.   
 Consider the primitive function $\phi(x) = \int \frac{\ddd s}{2 w(s)}$ and    the functions
\begin{equation}\label{eq:sub}
\psi_{1} (x) =  \sqrt{w(x)} \, e^{\sqrt{- \frac{2C}{m}} \phi(x)} \quad \text{and} \quad \textrm{sign}(w(x)) \,\psi_{2} (x) = \sqrt{w(x)} \, e^{- \sqrt{- \frac{2C }{m}} \phi(x)}.
\end{equation}
Then, equation
\begin{equation}\label{eq:lin}
 \psi_{i, xx} + \frac{1}{2} \frac{\sigma(x)}{m} \psi_i = 0, \quad i = 1, 2   
\end{equation}
holds if ans only if equation
\begin{equation}\label{eq:base}
 w_{xx} (x) w (x) - \frac{1}{2} w^2_x (x) + \frac{\sigma(x) w^2(x) - C }{m} = 0.  
\end{equation}
holds. 
\end{lemma}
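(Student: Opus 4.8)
The statement is a routine but instructive identity, and I would prove it by direct differentiation, organized through logarithmic derivatives. Abbreviate $k := \sqrt{-2C/m}$; the computation below is purely formal and works verbatim whether $k$ is real or imaginary, since only $k^2 = -2C/m$ will survive. Since the hypothesis $w>0$ forces $\mathrm{sign}(w(x))\equiv 1$, the two functions in \eqref{eq:sub} are simply $\psi_1 = \sqrt{w}\,e^{k\phi}$ and $\psi_2 = \sqrt{w}\,e^{-k\phi}$, both of the form $\psi = \sqrt{w}\,e^{\varepsilon k\phi}$ with $\varepsilon = \pm 1$. Using $\phi_x = \tfrac{1}{2w}$, one gets
\[
\frac{\psi_x}{\psi} = \frac{w_x}{2w} + \varepsilon k\,\phi_x = \frac{w_x + \varepsilon k}{2w}.
\]

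Next I would compute the second logarithmic derivative via $\psi_{xx}/\psi = (\psi_x/\psi)_x + (\psi_x/\psi)^2$, which after collecting terms gives
\[
\frac{\psi_{xx}}{\psi} = \frac{w_{xx}w - w_x^2 - \varepsilon k w_x}{2w^2} + \frac{w_x^2 + 2\varepsilon k w_x + k^2}{4w^2} = \frac{2 w_{xx} w - w_x^2 + k^2}{4w^2}.
\]
The crucial feature is that all terms linear in $\varepsilon k$ cancel: this is why $\psi_1$ and $\psi_2$ satisfy the same second-order equation, and why only $k^2=-2C/m$ (and not the sign of $C$) enters. Since $w>0$ we have $\psi\ne 0$, so \eqref{eq:lin} is equivalent to $\tfrac{2 w_{xx} w - w_x^2 + k^2}{4w^2} + \tfrac{1}{2}\tfrac{\sigma}{m} = 0$; multiplying by $2w^2\ne 0$ and inserting $k^2/2 = -C/m$ yields exactly \eqref{eq:base}. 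Reading the chain backwards, and noting that every step was division by a quantity that is nonzero under the hypotheses (namely $\psi$ and $w^2$), gives the ``if and only if''.

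There is essentially no hard step. The only matters to handle with a word of care are: the case $-2C/m<0$, where $k$ is imaginary but harmless because we only use $k^2$ after the cancellation; the trivial case $C=0$, where $\psi_1=\psi_2=\sqrt w$ and the lemma only claims these functions solve \eqref{eq:lin}, not that they span the solution space; and the observation that the sign prefactor in \eqref{eq:sub} is redundant here since $w>0$. One may also present the identity conceptually: $w = \psi_1\psi_2$ is the product of two solutions of $\psi_{xx} = -\tfrac{\sigma}{2m}\psi$, their Wronskian is the constant $-k$, and \eqref{eq:base} is the classical first integral $2 w w_{xx} - w_x^2 + \tfrac{2\sigma}{m}w^2 = -W^2 = \tfrac{2C}{m}$ relating a product of solutions to the Wronskian; but the direct computation above is self-contained and also supplies the needed passage from $w$ to $\psi$.
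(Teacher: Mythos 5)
Your proof is correct and follows essentially the same route as the paper's: direct substitution of $\psi=\sqrt{w}\,e^{\pm k\phi}$ with $k^2=-2C/m$, computing $\psi_{xx}$ and using $\psi\neq 0$ (and $w>0$) to read the equivalence of \eqref{eq:lin} and \eqref{eq:base} in both directions. Your bookkeeping via logarithmic derivatives, and the remarks on imaginary $k$, the case $C=0$, and the Wronskian interpretation, are just cleaner packaging of the same computation.
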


\begin{remark} { \phantom{1}}

\begin{enumerate} \item     The  function $\phi(x)$ is     defined up to an addition of a constant, which corresponds to multiplication of the function   $\psi_1$  by a constant and division of $\psi_2$ by the same constant. 
\item   
The change of the sign of the function $w$ does not affect the   equation \eqref{eq:base}. If $w<0$ on the interval we work, the  functions  
$$ \sqrt{-w(x)} \, e^{-\sqrt{- \frac{2C}{m}} \phi(x)} \quad \text{and} \quad -  \sqrt{-w(x)} \, e^{ \sqrt{- \frac{2C }{m}} \phi(x)}$$
are solution of \eqref{eq:lin}. 
 
\item The functions $\psi_1, \psi_2 $ clearly have the property $\psi_1\psi_2= w$. One easily shows that,  within the 2-dimensional solution space of  \eqref{eq:lin}, the property $\psi_1\psi_2= w$ defines (up to swapping, multiplication one by a constant and division of the second by the same constant) two solutions of \eqref{eq:lin} which form a basis in the space of all solutions. 
\end{enumerate}
\end{remark}

\begin{proof}[Proof of  Lemma \ref{lem:p2}]  Note that the equations   \eqref{eq:lin} and \eqref{eq:base} are both second order ODEs so substituting  $\psi_i$ given by \eqref{eq:sub} in \eqref{eq:lin} will give a second order differential equation on $w$. We will show that the formula for $\psi_i$ is chosen such that the obtained equation is equivalent to \eqref{eq:base}. 

Let us do the corresponding calculations.       
To simplify the formulas, we assume $w>0$, denote $k = \sqrt{- 2\frac{C }{m }}$  and consider     $\psi:= \psi_1$. We get
$$
\begin{aligned}
\psi_x & = \frac{1}{2} \frac{w_x}{\sqrt{w}} e^{k\phi} + \frac{k}{2\sqrt{w}} e^{k\phi}, \\
\psi_{xx} & = \frac{1}{2} \frac{w_{xx}}{\sqrt{w}} e^{k\phi} - \frac{1}{4}\frac{w_x^2}{w^{3/2}} e^{k\phi} + \frac{k}{2} \frac{w_x}{w^{3/2}} e^{k\phi} -\frac{k}{2} \frac{w_x}{w^{3/2}} e^{k\phi} + \frac{k^2}{4 w^{3/2}}e^{k\phi} - \frac{1}{2} \sigma(\lambda) \sqrt{w} e^{k\phi} = \\
& = \frac{1}{2}\Bigg( \frac{w_{xx}}{w} - \frac{1}{2} \frac{w_x^2}{w^2} - \frac{C}{m w^2}\Bigg) \sqrt{w} e^{k\phi} = - \frac{1}{2} \frac{\sigma}{m} \psi_1.
\end{aligned}
$$
Thus, the equation \eqref{eq:lin} holds for the function $\psi_1$. The proof for   $\ \psi_2$ is analogous.

Let us now prove the Lemma in the other direction. Assume that \eqref{eq:lin} holds. The  same computation give us 
$$
0 = \Bigg( \frac{w_{xx}}{w} - \frac{1}{2} \frac{w_x^2}{w^2} - \frac{C}{m w^2} + \frac{\sigma}{m} \Bigg) \psi_1. 
$$
As $w$ is not zero by assumptions, \eqref{eq:sub} implies that $\psi_1$ is not zero so 
 \eqref{eq:base} holds.  
\end{proof}

The equation \eqref{eq:lin} is just the Sturm-Liouville equation corresponding to  the potential   $\frac{1}{2} \frac{\sigma(x )}{m}$; its operator is 
$$
\mathcal{H} = \tfrac{\partial^2 }{\partial x^2} + \frac{1}{2} \frac{\sigma(x )}{m } \operatorname{Id}.
$$
In this  notation,  the equation \eqref{eq:lin} reads 
\begin{equation}\label{eq:lin2}
\mathcal{H }\psi = 0.    
\end{equation}

Next, note that \eqref{eq:base} 
is just the equation  \eqref{eq:base1} with  $m = m_0$ and $\sigma(x,\lambda)= \lambda -\tfrac{1}{2}w_1 .$ 
We assume $m_0=1$ for simplicity, see the discussion around \eqref{eq:kdv}.  In this case, \eqref{eq:base} has the   form
\begin{equation}\label{k:base2}
w_{xx} (x, \lambda) w (x, \lambda) - \frac{1}{2} w^2_x (x, \lambda) + \underbrace{\sigma(x, \lambda)}_{ {\lambda}- \tfrac{1}{2} w_1 } w^2(x, \lambda) = C(\lambda).   
\end{equation}
(The importance of the last equation was first emphasized in \cite{GD75}).
In view of Theorem \ref{thm:1}, the corresponding Sturm-Liouville equation from  Lemma \ref{lem:p2} reads then 
\begin{equation} \psi_{xx}+ \tfrac{1}{2}\left( {\lambda}- u \right) \psi    =0. \label{eq:SL} \end{equation}
It is the so-called Hill-Schr\"odinger equation, i.e., the equation on the eigenfunctions with   eigenvalue $ \tfrac{1}{2} \lambda   $
for the  operator   
\begin{equation}
  \label{eq:hs}
\left(-\tfrac{\partial }{\partial x^2} + \tfrac{1}{2} u\right)\psi =  \tfrac{1}{2} \lambda \psi. 
\end{equation}

Note that \eqref{eq:hs} is an ordinary linear  differential equation of the  second order, so it has  precisely  two linearly independent solutions for each $\lambda$, on any connected open  interval where  $u$ is defined.  
We say that  $ \tfrac{1}{2} \lambda\in \mathbb{R}$ lies in {\it spectrum}, if $u$ is defined on the whole $\mathbb{R}$ and there exists a \underline{bounded} non-zero solution $\psi$ of \eqref{eq:hs}.

\begin{proof}[Proof of Theorem \ref{lem:p3}]
First, let us show, that the bands, indeed, lie in the spectrum.   First, observe that the roots  $r_1,...,r_{2N+1}$ lie in the spectrum. Under the assumption $\lambda=r_i$, both solutions from  \eqref{eq:sub} coincide and give us  a solution 
\begin{equation}\label{eq:sub1}
\psi  (x, \lambda) = \sqrt{|w(x, \lambda)|} 
\end{equation}
which is bounded in view of Corollary \ref{cor:1}.

Next, consider  $\lambda$ from the  open interval    $(r_{2i-1}, r_{2i}) $.   For  each  $\lambda \in (r_{2i-1}, r_{2i})  $ we have $C(\lambda) >  0$, as no root of the characteristic 
polynomial of $L$ lies in such an interval. Therefore,  $w(x, \lambda)$ does not change the sign and we may   assume $w(\lambda) >0$.  Since $C(\lambda)$ is positive, $\sqrt{-2C(\lambda)}$ is purely  imaginary so  
$e^{\sqrt{-2C(\lambda)} \phi(x) }$ is bounded. Then,  the solution  $\psi_1=\sqrt{w(x)} e^{\sqrt{-2C(\lambda)} \phi(x) }$  from \eqref{eq:sub} is bounded as well. Thus, $\tfrac{1}{2} \lambda $ lies in the  spectrum. 

The infinite band can be handled using a similar argument: Take 
$\lambda \in [r_{2N+1}, +\infty)$. We  again have that   $\lambda$ cannot be a root of   the characteristic 
polynomial of $L$. Therefore,  $w(x; \lambda)$ does not change the sign and we may think that it is positive.  Since $C(\lambda)$ is positive, $\sqrt{-2C(\lambda)}$ is purely  imaginary so  
$e^{\sqrt{-2C(\lambda)} \phi(x) }$ is bounded. Finally,  the solution  $\psi_1=\sqrt{|w(x)|} e^{\sqrt{-2C(\lambda)} \phi(x)} $  from \eqref{eq:sub} is bounded  as well. Thus, $\tfrac{1}{2} \lambda $ lies in the  spectrum.

Next, let us show that  for  $\lambda \in \mathbb{R}$ such that $C(\lambda)<0$   
the ODE  \eqref{eq:hs} does  not have bounded solutions.  We start with the case when  $\lambda<r_1$. In this case, $w(x)$    is separated  from zero, we may assume that it is positive.
 The function $\phi(x)= \int  \tfrac{\ddd s}{2w(s)}  $ is therefore unboundend.  Since $\sqrt{-2C(\lambda)}$ is a real positive number, the function $\psi_1$ from \eqref{eq:sub} is unbounded for $x\to \infty $ and goes to zero for $x\to -\infty$. The function  $\psi_1$ from \eqref{eq:sub} is unbounded for $x\to -\infty $ and goes to zero for $x\to  \infty$. Therefore, no  nontrivial linear combination  of $\psi_1$ and $\psi_2$  is bounded.

Finally, let us consider the  most complicated case   $\lambda\in (r_{2i}, r_{2i+1})$. Without loss of generality, we may assume that $i=1 $    and $\lambda=0$, so $r_2<0$ and $r_3>0$.
As above, we  denote by $q_1,\dots, q_N$ the
eigenvalues of $L$ and consider their dependence on the ``time'' $x$, that is, the functions 
$q_i(x)$. We    reserve the notation $\tilde x $ for values such that 
  $q_1(\tilde x)=\lambda=0$.  Note also that the assumption $\lambda=0$  implies $w(x)= q_1(x)q_2(x)\cdots q_N(x)$ with all  $q_i\ge  r_3>0 $ for $i \ge 2$. The functions $q_i$ are  well-defined at least continuous functions. 
We know from Lemma \ref{lem:behaviour} that the derivative  $\tfrac{\ddd q_1}{\ddd x}$ is not  zero  at $ \tilde x$ such that 
$q_1(\tilde x)=0$. Actually, by \eqref{eq:legendre},  it is given 
at any point $x$   by 
\begin{equation}\label{eq:dq1}
    \tfrac{\ddd q_1(  x)}{\ddd x}  =\pm \frac{\sqrt{-2C(0) }}{ \left(q_2(x)-  q_1(x)\right)\left(q_3(x)-q_1(x)\right) \cdots \left(q_N(x)-q_1(x)\right)}, 
\end{equation} 
so at $x=\tilde x$
it is given by 
\begin{equation}\label{eq:dq}
    \frac{\ddd q_1(\tilde x)}{\ddd x}  =\pm \frac{\sqrt{-2C(0) }}{q_2(\tilde x)q_3(\tilde x)\cdots q_N(\tilde x)}. 
\end{equation} 
This in particular implies that, at the point $\tilde x$, the derivative of $w$ is given by 
\begin{equation}\label{eq:dw}
    \frac{\ddd w(\tilde x)}{\ddd x}  =  \frac{\ddd q_1(\tilde x)q_2(\tilde x)\cdots q_N(\tilde x)}{\ddd x} = q_2(\tilde x)\cdots q_N(\tilde x)\frac{\ddd q_1(\tilde x)}{\ddd x} = 
    \pm \sqrt{-2C(0) } . 
\end{equation}

We see that as $C(0)\ne 0$,  the derivative of $q_1$ and of $ w$ is not zero at $\tilde x$. The sign 
 $\pm$  depends on whether  the functions $q_1, w$ are locally increasing or decreasing at a small neighborhood of $\tilde x$. Note also that the derivative of $w$ at the points such that $q_1=0$ has the same absolute value    and its sign depends on whether the sign of $w$ changes from ``$+$'' to ``$-$'' or from  ``$-$'' to ``$+$'' when $x$ passes this point. 

This in particular implies that the points $\tilde x$ such that $q_1(\tilde x)=0$ exist, as the derivative of the function
$q_1(x)$ is nonzero for $x$ such that $r_1<q_1(x)<r_2\, , \ q_1(x)\ne 0$ by  \eqref{eq:legendre}.  

Lemma \ref{lem:p2} gives us two eigenfunctions $\psi_1$ and $\psi_2$ 
of the equation \eqref{eq:hs} at the interval  where $w$ does not change sign. 
Note that as $\psi_1, \psi_2$ are solutions of a linear ODE with smooth coefficients, they can be extended to solutions  defined and smooth on the whole $\mathbb{R}$.  The solutions $\psi_1, \psi_2$ are linearly independend and therefore form a basis in the two-dimensional linear space of the solutions of \eqref{eq:hs}.

 Let us understand how the solutions $\psi_1,\psi_2$ behave near zeros of $w$. At a zero $\tilde x$ of $w$, using \eqref{eq:dq},
 we  have that the 
  primitive function $\phi(x)= \int^x \sqrt{-2C(0)} \tfrac{1}{2 w(s)} ds $ behaves  asymptotically, for $x\to \tilde x$, as $\textrm{const}  \pm   \tfrac{1}{2}\ln |w(x) |$.   This implies that one of the functions $\psi_1(x)$ and $\psi_2(x)$  behaves asymptotically, for $x\to \tilde x$,  as  
$$
  \textrm{const} \cdot w    \ \ \textrm{and  another as }  \  \  \  1/\textrm{const  }   
$$
(for a certain nonzero constant $\textrm{const}$ which of course may depend on the choice of the point $\tilde x$ at which $q_1=0$).

Next, observe that as objects we considered were real-analytic, we can think that our parameter $x$ is complex-valued, $x\in \mathbb{C}$. The  solutions $w(x)$ and $q_1(x)$ are then holomorphic functions of $x\in \mathbb{C}$. Since the derivatives of $q_1$ and of $w$ are not zero at the points $\tilde x$ such that $q_1=0$, such points are isolated in a small neighborhood of the real line.  We denote this neighborhood by $W$.

We denote by $\dots <\tilde x_{-1}<\tilde x_0<\tilde x_1<\tilde x_2<\dots $ the points of $\mathbb{R}$ 
such that $q_1(\tilde x_i)=0$, the sequence of such $\tilde x_i$ could be unbounded in both directions, in one direction, or can simply be finite.  
Assume without loss of generality 
that $w>0$ on the  interval $(\tilde x_0, \tilde x_1)\subset\mathbb{R}$. Then it changes sign from ``$+$'' to ``$-$'' at   the points $x_0, x_1$, so it is negative on the intervals $(x_{-1}, x_0)$ and $(x_1, x_2)$,   positive on the intervals 
$(\tilde x_2, \tilde x_3)$,  $(\tilde x_{-2},\tilde x_{-1})$,  negative on   the intervals
$(\tilde x_{-3}, \tilde x_{-2})$,  $(\tilde x_{3},\tilde x_{4})$ and so on. The  intervals such that $w$ is positive will be called {\it positive intervals, } and such that  $w$ is negative  will be called {\it negative intervals. }

Next, denote by $\widetilde W$ the neighborhood $W$ without negative intervals. As at the endpoints of the negative intervals the the derivatives of $w$ at $\tilde x_i$ and $\tilde x_{i+1}$ coincide in absolute values but have different signs by \eqref{eq:dw}, the  corresponding  residues cancels. Then, the primitive function $\phi= \int \tfrac{1}{2w(s)}\ddd s$ from Lemma \ref{lem:p2} is well-defined on the whole $\widetilde W$. This implies that the formula \eqref{eq:sub} globally   defined the restriction of global solutions 
   of \eqref{eq:lin}
   to $\widetilde W$.

  Recall now that the product $\psi_1 \psi_2$ equals $\omega,$ and at the points $x\in \mathbb{R}$ 
  with $|q_1(r)|>\varepsilon>0$  the function $w$ is separated from zero.
  Our next goal is to  show now that for $x\in \mathbb{R}_{>0}$ the function $\psi_1$ achieves arbitrary big  values.
  Recall that the equation \eqref{eq:lin} is a linear second order ODE and its solution space is 2-dimensional, so any   solution  $\psi$   on $\widetilde W$ is given by  \begin{equation} \label{eq:psis} \const_1 \psi_1 + \const_2 \psi_2 .\end{equation}
  At the points where $\psi_1$ is big in its  absolute value,  the function $\psi_2$ is small so a bounded linear combination $\const_1 \psi_1 + \const_2 \psi_2 $   for unbounded $\psi_1$ implies $\const_1=0$.

In order to do it,  let us  decompose the function   $w(x) $ in the product $ q_1(x) \bar w(x)$, where $\bar w(x)= q_2q_3\cdots q_N$. In our setup, $\bar w(x)$  is positive for all $x$.   
Consider  a pair    $(a(x),b)$, where $a$ is a function and $b$ is a  sufficiently small  positive constant   
such that \begin{equation}\label{eq:con1} 
a(x) \bar w(x) + b q_1(x) = 1.\end{equation}
The existence of such a pair $(a(x), b)$ is clear. Indeed, if we take $b>0$  sufficiently small, then $1 - b q_1(x) > \varepsilon >0$ for all $x$. Then the function $a(x)$ is given  by
$$
a(x) = \frac{1 - b q_1(x)}{\bar w(x) }. 
$$ 
If $b$ is small enough,  $a(x)$ is   separated  from zero. 

The condition \eqref{eq:con1} implies that
$$
\frac{1}{q_1(x) \bar w(x)} = \frac{a(x)}{q_1(x)} + \frac{b}{\bar w(x)}.
$$
Combining this formula with \eqref{eq:sub},  we obtain 
\begin{equation}\label{eq:rb}
\psi_1 = \left(\sqrt{q_1} e^{\sqrt{- 2 C(0)} \int \frac{a(s)}{2 q_1(s)} \ddd  s } \right) \cdot \left(  {\sqrt{\bar w} e^{\sqrt{- 2 C(0)} \int \frac{b}{2 \bar w(s)}\ddd s}}\right).
\end{equation}
The second factor on the  right hand side of \eqref{eq:rb} is unbounded for $x \to +\infty$ as the function  $\frac{b}{2 q_1(s)}$ is positive,  $\sqrt{-2C(0)}$  is positive and $\bar w$ is separated from zero. Note that the second factor $$\left(  {\sqrt{\bar w} e^{\sqrt{- 2 C(0)} \int \frac{b}{2 \bar w(s)}\ddd s}}\right)$$ can be extended to the whole real line $\mathbb{R}, $ as the integrand  $ \tfrac{b}{2 \bar w(s)}$  has no poles. 

The function $q_1(s)$ in the first factor must, for certain arbitrary large  $s$, be different from  zero, as its derivative is different from zero   by   Lemma \ref{lem:behaviour}
at the points where $q_1=0$.  Its derivative $\tfrac{\ddd}{\ddd x}q_1(x)$  satisfies \eqref{eq:legendre} and therefore for every fixed  $\tilde q_1\in (r_1, r_2)$   the derivative  $\tfrac{\ddd}{\ddd x}q_1(x)$  is bounded from zero on the set   $\{s\in \mathbb{R}  \mid q_1(s)=\tilde q_1\}$. Then,   the first bracket   \eqref{eq:rb}  cannot be arbitrarily close to zero implying that  \eqref{eq:rb} unbounded for $x\to +\infty$.  As explained above, this means that $\const_1$ in \eqref{eq:psis} is zero, 

Similarly,   the function $\psi_2$ is bounded for $x\to -\infty$ implying $\const_2=0$.  Theorem is proved.
\end{proof}

\section{Conclusion and outlook} \label{sec:con}

The goal of this note was  to understand the famous results of \cite{novikov74, moser80, moser81} using new approach and new methods coming from  recent investigation \cite{nijapp4,BKMreduction} of   
BKM systems.   We have shown that it is possible, at least for a part of results, and  the proofs in the present paper are  shorter   than that of \cite{moser80,moser81}. Of course the proofs use the preliminary work done in \cite{nijapp4,BKMreduction}. But still, at least in our eyes, we replaced the ``mathematics magic''  by ``mathematical methods''; in particular the wonderful observations of H. Kn\"orrer \cite{knorrer82} relating  the geodesics of ellipsoid to the solutions of the Neumann systems, and of Veselov \cite{veselov80} relating finite-gap solutions of the KDV to the solutions of the Neumann system are direct corollaries of Lemma \ref{lem:2}, which can be applied to a much wider class of systems. 

But in fact the main motivation behind this investigation  is as follows: the de-facto  main approach of  tackling  KdV equations is  the so-called  inverse scattering method. Inverse scattering method
is based on the understanding  of the 
eigenfunctions of the Schr\"odinger operator  $-\tfrac{\ddd^2}{\ddd x^2}  +  u(t, x)$, where $u(t, x)$ is essentially  a solution of the KdV equation, and studying their evolution  when we change $t$.  It is clearly related to the problem we consider, and actually the results  and methods of \cite{novikov74, moser80, moser81} came from   this approach, see also \cite{BM2006,MB2010, BM2008} for recent developments. 
A natural question on which we will concentrate in our next investigation  is how to  adapt 
the inverse scattering method for studying   general BKM  systems, with the goal  to  generalise   famous results on the KdV equation for all BKM systems.   

Let us indicate  first  observation  in this direction. 
First let us recall that in the finite-dimensional reduction of BKM systems studied in \cite{BKMreduction}  the analog of \eqref{eq:base1}  is   
\begin{equation}\label{eq:base_new}
  w_{xx} (x, \lambda) w (x, \lambda) - \frac{1}{2} w^2_x (x, \lambda) + \frac{\sigma(w, \lambda) w^2(x, \lambda) - C(\lambda)}{m(\lambda)} = 0.   
\end{equation}
Above $m(\lambda)$ is a polynomial of degree $\le n$ with constant coefficients used for the construction of the BKM system, $C$ is a polynomial of degree $2N+1$, $w=w(x)$ is the  monic  polynomial \eqref{eq:11a} of degree $N$. The function $\sigma(w,\lambda) $ is  the monic polynomial of degree $n$ in $\lambda$, whose coefficients are uniquely defined by the condition that $\frac{\sigma(w, \lambda) w^2(x, \lambda) - C(\lambda)}{m(\lambda)}$ is a monic polynomial of degree $2N-1$. They are functions of $w_1, w_2,\dots, w_n$. 

The system \eqref{eq:base_new} is polynomial in $\lambda $ of degree $2N-1$, so it is equivalent to a system of $2 N $ ODEs of the second order  on $N$ functions $w_1,\dots, w_N$. This system  is equivalent to the system of Euler-Lagrange equation, whose Lagrangian is the sum of the kinetic energy coming from the flat metric \eqref{eq:comp} and  a certain potential energy depending on parameters.
   As in the KdV case, the system is a Benenti system. The potential is slightly more complicated compared to that  in the KdV case and the corresponding function $U$ is  a  polynomial of a  possibly higher degree,   if $m(\lambda)=\const$, 
   or a rational function if $m(\lambda)\ne \const$.

Having a solution of this system of ODEs, one constructs a solution of the corresponding  BKM system  as follows: recall that  in order to choose a BKM system  we should choose  
a differentially nondegenerate Nijenhuis operator  $\tilde L$ 
on $\mathbb{R}^n $ with coordinates $u_1,\dots,  u_n$. The solution $(u_1,\dots, u_n)$
corresponding to the BKM system is constructed by the solution $w$ of the  \eqref{eq:base_new} by the formula 
\begin{equation} \label{eq:sigma} 
\sigma(w, \lambda)= \det(\lambda \Id - \tilde L).  \end{equation}
 For a given $\tilde L$, the  right hand side of \eqref{eq:sigma}  is an explicit expression in $u$ from which we may reconstruct $u$.  In particular, if in the coordinates $u$ the operator $\tilde L$  has the companion form \eqref{eq:comp} (with $w$ replaced by $u$ and $N$ replaced by $n$), then  $\det(\lambda \Id - \tilde L)= \lambda^n + u_1\lambda^{n-1}+\cdots+ u_n$, so the formula   \eqref{eq:sigma}  gives us an explicit formula for $u_1,\dots, u_n$,

Note that the equation \eqref{eq:base_new} has the form \eqref{eq:base1}. The dependence on $\lambda$ is though different from that of in the KdV case. The analog of the 
equation \eqref{eq:SL} in this case is the equation 
\begin{equation} \psi_{xx}+ \tfrac{1}{2}\frac{\det(\lambda \Id - \tilde L(u(x)) }{m(\lambda)}\psi    =0. \label{eq:SL2} \end{equation}

We see that for $n\ge 2$ the dependence on the $\lambda$ is not linear, differently from  the Hill-Schr\"odinger equation.  Indeed, the function  $\tfrac{1}{2}\frac{\det(\lambda \Id - \tilde L(u(x)) }{m(\lambda)} $ is a  rational function in $\lambda$ whose coefficients  depend on  $x$. 

The special case $m(\lambda) = \textrm{const}=m_0$ of this equation   appeared in the literature, see  e.g. \cite{ af,alonso}, under the name ``energy-dependent potentials'', in the relation to multicomponent integrable systems. The scattering problem for such systems were 
studied  quite recently, see e.g.  \cite{nab,hm}.

We say that $\lambda$ lies in the spectrum of \eqref{eq:SL2}, if there exists a bounded nonzero solution $\psi$. For certain $u$ coming from finite-dimensional reductions of the BKM systems, the spectrum is the 
finite-band one, in the sense it  contains finitely many connected components.  It will be interesting to understand whether in the class of quasi-periodic coefficients of  the equation \eqref{eq:SL2}  the finite-bandness of the spectrum is equivalent to the property that the coefficients came from a BKM system. Moreover, as mentioned above, it will be generally interesting to develop the inverse scattering approach to BKM systems; the corresponding Sturm-Liouville equation is expected to be \eqref{eq:SL2}.

\subsection*{Acknowledgement. }  A.\,K. was supported by the Ministry of Science and Higher Education of the Republic of Kazakhstan (grant No. AP23483476). V.\,M. thanks the DFG (projects 455806247 and 529233771), and the ARC  (Discovery Programme DP210100951) for their support. A part of the work was done during the preworkshop and workshop on Nijenhuis Geometry and Integrable Systems at La Trobe University and the Matrix Institute. The participation of A.K. and V.M. at the workshop was supported by the Simons Foundation, and the participation of A.K. at the preworkshop was partially supported by the ARC Discovery Programme DP210100951.   We thank A. Bolsinov, H. Dullin and K. Sch\"obel for useful discussions. 
The circle of questions which lead to this note, see \cite[\S 2.4]{problist},  was motivated by discussions of V.M. with H. Dullin at the Sydney Mathematics  Research Institute.  V. M.  thanks the Sydney Mathematics  Research Institute for hospitality and for partial financial support during his visits in  2023 and 2024.   We  are very grateful to  A. Bolsinov for his careful reading of 
the almost final version of the manuscript  and finding  misprints, and for the anonymous referee for his suggestions.

\bibliographystyle{spmpsci}
\bibliography{nijenhuis}

\end{document}